\documentclass[a4paper]{amsart}
\usepackage[utf8]{inputenc}
\usepackage{amsfonts}
\usepackage{amsmath,amssymb,graphicx,epsfig}
\usepackage{enumerate,amsthm,epstopdf}
\usepackage{bbm}
\usepackage{dsfont}
\usepackage{algpseudocode}
\usepackage{algorithmicx}
\usepackage{algorithm}
\usepackage{listings}
\usepackage{hyperref}
\hypersetup{colorlinks,linkcolor={blue},citecolor={bred},urlcolor={blue}}
\usepackage{cleveref}
\usepackage{xcolor}

\definecolor{bred}{rgb}{0.8,0,0}
\usepackage{fullpage}
\usepackage[utf8]{inputenc}
\usepackage[T1]{fontenc}
\usepackage[english]{babel}



\usepackage[makeroom]{cancel}
\usepackage{verbatim}

\DeclareMathOperator{\var}{var}

\newcommand{\argmin}{\operatornamewithlimits{argmin}}

\def\cF{{\mathcal F}}
\def\cG{{\mathcal G}}
\def\cH{{\mathcal H}}

\def\cP{{\mathcal P}}
\def\sX{{\mathsf X}}

\def\bR{{\mathbb R}}

\def\bE{{\mathbb E}}

\def\bN{{\mathbb N}}

\def\NPDF{{\mathcal N}}

\def\f0{{\mathbf 0}}

\def\md{{\mathrm{d}}}

\def\qed{$\Box$}

\newtheorem{thm}{Theorem}

\newtheorem{assumption}{Assumption}

\newtheorem{prop}{Proposition}

\theoremstyle{definition}

\newtheorem{rem}{Remark}





\title{Global convergence of \\ optimized adaptive importance samplers}

\author{\"Omer Deniz Akyildiz}
\address{Department of Mathematics, Imperial College London, UK}
\email{{\textcolor{bred}{\footnotesize \texttt{deniz.akyildiz@imperial.ac.uk}}}}

\subjclass[2020]{Primary 65C05, 65C35; 65D30; 62F99; 90C26}

\keywords{Adaptive importance sampling, variance minimization, non-convex optimization}  
  
\begin{document}
\maketitle
\begin{abstract}
We analyze the \textit{optimized adaptive importance sampler} (OAIS) for performing Monte Carlo integration with general proposals. We leverage a classical result which shows that the bias and the mean-squared error (MSE) of the importance sampling scales with the $\chi^2$-divergence between the target and the proposal and develop a scheme which performs global optimization of $\chi^2$-divergence. While it is known that this quantity is convex for exponential family proposals, the case of the general proposals has been an open problem. We close this gap by utilizing the nonasymptotic bounds for stochastic gradient Langevin dynamics (SGLD) for the global optimization of $\chi^2$-divergence and derive nonasymptotic bounds for the MSE by leveraging recent results from non-convex optimization literature. The resulting AIS schemes have explicit theoretical guarantees that are uniform-in-time.
\end{abstract}

\section{Introduction}\label{secIntro}
Importance sampling (IS) is one of the most fundamental methods to compute expectations w.r.t. a target distribution $\pi$ using samples from a proposal distribution $q$ and reweighting these samples. This procedure is known to be inefficient when the discrepancy between $\pi$ and $q$ is large. To remedy this, adaptive importance samplers (AIS) are based on the principle that one can iteratively update a sequence of proposal distributions $(q_k)_{k\geq 1}$ to obtain refined and better proposals over time. This provides a significant improvement over a naive importance sampler with a single proposal $q$. For this reason, AIS schemes received a significant attention over the past decades and enjoy an ongoing popularity, see, e.g., \cite{bengio2008adaptive,bugallo2015adaptive,martino2015adaptive, kappen2016adaptive, bugallo2017adaptive,elvira2017improving,martino2017layered,elvira2019generalized}. The general and most generic AIS scheme retains $N$ distinct distributions centred at the samples from the previous iteration and constructs a mixture proposal; variants of this approach include population Monte Carlo (PMC) \cite{cappe2004population} or adaptive mixture importance sampling \cite{cappe2008adaptive}. Although these versions of the methods have been widely popular, all these methods still lack theoretical guarantees and convergence results as the number of iterations grows to infinity (see \cite{douc2007convergence} for an analysis in terms of $N$). In other words, there has been a lack of theoretical guarantees about whether this kind of adaptation moves the proposal density towards the target, and if so, in which metric and at what rate. The difficulty of providing such rates stems from the fact that it is difficult to quantify the convergence of the nonparametric mixture distributions to the target measure.

In this paper, we provide an analysis to address this fundamental question for a different (and more tractable) class of samplers, parametric AIS schemes, using the available results from nonconvex optimization literature. Recently, this fundamental theoretical problem was addressed by \cite{akyildiz2021convergence} who considered a specific family of proposals, i.e., the exponential family as fixed proposal family. In this case, a fundamental quantity in the MSE bound of the importance sampler, specifically the $\chi^2$-divergence (or equivalently the variance of the importance weights), can be shown to be convex which leads to a natural adaptation strategy based on convex optimization, see, e.g., \cite{arouna2004adaptative,arouna2004robbins, kawai2008adaptive, lapeyre2011framework, ryu2014adaptive, kawai2017acceleration, kawai2018optimizing} for the algorithmic applications of this property (see also \cite{fu2002optimal} for an application in a financial context). This quantity appeared and was investigated in other contexts, e.g., sequential Monte Carlo methods \cite{cornebise2008adaptive}, asymptotic analysis \cite{delyon2018asymptotic}, or to determine the necessary sample size for the IS \cite{sanz2018importance, sanz2021bayesian}. The convexity property of $\chi^2$-divergence when the proposal is from the exponential family was exploited by \cite{akyildiz2021convergence} to prove finite error bounds for the AIS that are {\textit{uniform-in-time}}, in particular, providing a general convergence rate $\mathcal{O}(1/\sqrt{k}N + 1/N)$ for the $L_2$ error for the importance sampler, where $k$ is the number of iterations and $N$ is the number of Monte Carlo samples used for integration. A similar result was extended to adaptive optimisers in \cite{perello2023adaptively}. However, these results do not apply for a general proposal distribution, as this results in a function in the MSE bound that is non-convex in the parameter of the proposal.

We address the problem of analzing the optimized AIS in the general setting by applying non-convex optimization results for $\chi^2$ divergence. This enables us to prove global convergence results for the AIS that can be controlled by the parameters of the non-convex optimization schemes. Specifically, we use stochastic gradient Langevin dynamics (SGLD) \cite{welling2011bayesian} for the analysis of the non-convex optimization schemes which minimize $\chi^2$-divergence. SGLD is a common proxy for the analysis of SGD and often exhibit similar properties, see, e.g., \cite{brosse2018promises}. Recently, global convergence of these algorithms for non-convex optimization were shown in several works, see, e.g., \cite{raginsky2017non, xu2018global, erdogdu2018global, zhang2023nonasymptotic, akyildiz2020nonasymptotic, gao2021global, lim2021polygonal, lim2021non}.

{We note that the use of {Markov chain Monte Carlo based proposals} within the AIS is explored before \cite{martino2017layered,martino2017anti,llorente2021mcmc}. In particular, the Langevin algorithm based proposals have been also used, see, e.g., \cite{fasiolo2018langevin,elvira2019langevin,elvira2021optimized,mousavi2021hamiltonian, elvira2022gradient}. However, these ideas are distinct from our work, in the sense that they explore driving the mixture parameters w.r.t. the gradient of the log-target, i.e., $\nabla_x \log\pi(x)$ and using these parameters to construct mixture proposals in the standard AIS setting. We instead use Langevin algorithm to minimise $\chi^2$-divergence. Our proposal adaptation approach is motivated by quantitative error bounds, hence has provable guarantees. Other MCMC-based methods also perform well and are interesting for a future analysis -- but require a different approach. Finally, various other measures to optimize proposals were also considered in the literature such as the Kullback-Leibler divergence or other parametric measures, see, e.g., \cite{el2019variational, el2019stochastic, el2021policy, jerfel2021variational}. These methods also perform well in practice but are harder to verify from a theoretical perspective.}

\textbf{Organization.} The paper is organized as follows. In Sec~\ref{sec:background}, we provide a brief background of adaptive importance sampling schemes and, specifically, the parametric AIS which we aim at analyzing. We also introduce the fundamental results on which we rely in later sections. In Section~\ref{sec:algorithms}, we describe the setting to globally optimize the $\chi^2$-divergence between the target and the proposal. In Section~\ref{sec:analysis}, we prove nonasymptotic error rates for the MSE for the setting where the adaptation is driven by standard gradient methods using results from non-convex optimization literature. These bounds are then discussed in detail in Section~\ref{sec:discussion}. Finally, we conclude with Section~\ref{sec:conclusions}.

\subsection*{Notation}
For an integer $k\in\bN$, we denote $[k] = \{1,\ldots,k\}$. The state-space is denoted as $\sX$ where $\sX \subseteq \bR^{d_x}$ with $d_x \geq 1$. We use $B(\sX)$ to denote the set of bounded functions on $\sX$, and $\cP(\sX)$ to denote the set of probability measures on $\sX$, respectively. We write $(\varphi,\pi) = \int \varphi(x) \pi(\mbox{d}x)$ or $\bE_\pi[\varphi(X)]$ and $\var_\pi(\varphi) = (\varphi^2,\pi) - (\varphi,\pi)^2$. {The notation $\delta_x(\mathrm{d} y)$ denotes a Dirac measure centred at $x$, i.e., for $\varphi \in B(\sX)$, we have $\varphi(x) = \int \varphi(y) \delta_x(\mathrm{d}y)$.}

We will use $\pi$ to denote the target distribution. Accordingly, we use $\Pi$ to denote the unnormalized target, i.e., we have $\pi(x) = \Pi(x)/Z_{\pi}$. We denote the proposal distribution with $q_\theta$ where $\theta \in \bR^{d_\theta}$ where $d_\theta$ denotes the parameter dimension. We denote both the measures, $\pi$ and $q_\theta$, and their densities with the same letters.

To denote the minimum value of functions $\rho, R$, we use $\rho_\star, R_\star$.

\section{Background}\label{sec:background}
In this section, we give a brief background and formulation of the problem. 
\subsection{Importance sampling}
Given a target density $\pi \in \cP(\sX)$, we are interested in computing integrals of the form
\begin{align}\label{eq:ProbDefn}
(\varphi,\pi) = \int_\sX \varphi(x) \pi(x) \mbox{d}x.
\end{align}
We assume that we can only evaluate the unnormalized density and cannot sample from $\pi$ directly. Importance sampling is based on the idea of using a proposal distribution to sample from and weight these samples to account for the discrepancy between the target and the proposal. These weights and samples are finally used to construct an estimator of the integral. In particular, let $q_\theta\in\cP(\sX)$ be the proposal with parameter $\theta \in \bR^{d_\theta}$, then the unnormalised target density $\Pi:\sX \to \bR_+$ {satisfies}
\begin{align*}
\pi(x) = \frac{\Pi(x)}{Z_\pi},
\end{align*}
where $Z_\pi :=\int_\sX \Pi(x) \md x < \infty$. Next, we define the unnormalized weight function $W_\theta:\sX \times \bR^{d_\theta} \to \bR_+$ as
\begin{align*}
W_\theta(x) = \frac{\Pi(x)}{q_\theta(x)}.
\end{align*}
Given a target $\pi$ and a proposal $q_\theta$, the importance sampling procedure first draws a set of independent and identically distributed (iid) samples $\{x^{(i)}\}_{i=1}^N$ from $q_\theta$. Next, we construct the empirical measure $\pi^N_\theta$ as
\begin{align*}
\pi_\theta^N(\mbox{d}x) = \sum_{i=1}^N \mathsf{w}_\theta^{(i)} \delta_{x^{(i)}}(\mbox{d}x),
\end{align*}
where,
\begin{align*}
\mathsf{w}_\theta^{(i)} = \frac{W_\theta(x^{(i)})}{\sum_{j=1}^N W_\theta(x^{(j)})}.
\end{align*}
Finally this measure yields the self-normalizing importance sampling (SNIS) estimate
\begin{align}\label{eq:SNISestimate}
(\varphi,\pi^N_\theta) = \sum_{i=1}^N \mathsf{w}_\theta^{(i)} \varphi(x^{(i)}).
\end{align}
Although the estimator \eqref{eq:SNISestimate} is biased in general, one can show that the bias and the MSE vanish with a rate $\mathcal{O}(1/N)$. Below, we present the well-known MSE bound (see, e.g., \cite{agapiou2017importance} or \cite{akyildiz2021convergence}).
\begin{thm}\label{thm:ISfund}
Assume that $(W_\theta^2,q_\theta) < \infty$. Then for any $\varphi\in B(\sX)$, we have
\begin{align}\label{eq:MSEbound}
\bE\left[\left((\varphi,\pi) - (\varphi,\pi_{\theta}^N)\right)^2\right] \leq \frac{c_\varphi \rho(\theta)}{{N}},
\end{align}
where $c_\varphi = 4\|\varphi\|_\infty^2$ and the function $\rho:\Theta \to [\rho_\star,\infty)$ is defined as
\begin{align}
\rho(\theta) = \bE_{q_\theta}\left[\frac{\pi^2(X)}{q^2_\theta(X)}\right],
\label{eqThm1-2}
\end{align}
where $\rho_\star := \inf_{\theta\in\Theta} \rho(\theta) \geq 1$.
\end{thm}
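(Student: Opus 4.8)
This is a classical self-normalized importance sampling bound, and the plan is to exploit the ratio structure of the estimator together with the boundedness of $\varphi$. First I would write $(\varphi,\pi_\theta^N) = \hat I_N/\hat Z_N$, where $\hat I_N = \frac1N\sum_{i=1}^N w_\theta(x^{(i)})\varphi(x^{(i)})$ and $\hat Z_N = \frac1N\sum_{i=1}^N w_\theta(x^{(i)})$ with the normalized weight $w_\theta = \pi/q_\theta$; the constant $Z_\pi$ cancels between the numerator and the denominator of the self-normalized weights $\mathsf{w}_\theta^{(i)}$, so I may work with $w_\theta$ in place of $W_\theta$ directly. Since the $x^{(i)}\sim q_\theta$ are iid, a direct computation gives the unbiasedness relations $\bE[\hat Z_N] = \int_\sX \pi(x)\,\md x = 1$ and $\bE[\hat I_N] = (\varphi,\pi)$, so both averages are Monte Carlo estimators of their respective targets.

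The key algebraic step is to avoid dividing by the random denominator $\hat Z_N$ directly, since its reciprocal moments are hard to control. Instead I would use the identity
\begin{align*}
(\varphi,\pi) - (\varphi,\pi_\theta^N) = \bigl((\varphi,\pi) - \hat I_N\bigr) + (\varphi,\pi_\theta^N)\bigl(\hat Z_N - 1\bigr),
\end{align*}
which follows by expanding the right-hand side and cancelling. The decisive observation is that $(\varphi,\pi_\theta^N)$ is a convex combination of the values $\varphi(x^{(i)})$, hence $|(\varphi,\pi_\theta^N)|\le\|\varphi\|_\infty$ holds deterministically. Applying $(a+b)^2\le 2a^2+2b^2$ and taking expectations then yields
\begin{align*}
\bE\bigl[\bigl((\varphi,\pi)-(\varphi,\pi_\theta^N)\bigr)^2\bigr] \le 2\,\bE\bigl[\bigl((\varphi,\pi)-\hat I_N\bigr)^2\bigr] + 2\|\varphi\|_\infty^2\,\bE\bigl[(\hat Z_N-1)^2\bigr].
\end{align*}

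Both remaining expectations are variances of sample means of iid terms, so each carries the factor $1/N$. For the second, $\bE[(\hat Z_N-1)^2] = \frac1N\var_{q_\theta}(w_\theta) = \frac1N(\rho(\theta)-1)$, using $\bE_{q_\theta}[w_\theta^2]=\rho(\theta)$ and $\bE_{q_\theta}[w_\theta]=1$. For the first, $\bE[((\varphi,\pi)-\hat I_N)^2] = \frac1N\var_{q_\theta}(w_\theta\varphi) \le \frac1N\bE_{q_\theta}[w_\theta^2\varphi^2] \le \frac{\|\varphi\|_\infty^2}{N}\rho(\theta)$. Substituting and bounding $2\rho(\theta)-1\le 2\rho(\theta)$ gives the claimed inequality with $c_\varphi = 4\|\varphi\|_\infty^2$. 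The same computation records $\rho(\theta) = 1 + \var_{q_\theta}(\pi/q_\theta)\ge 1$, which justifies $\rho_\star\ge 1$.

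The only genuine obstacle is the random denominator: a naive argument writing the error as $\bigl((\varphi,\pi)\hat Z_N - \hat I_N\bigr)/\hat Z_N$ would require controlling negative moments of $\hat Z_N$, which need not even be integrable. The decomposition above is precisely what circumvents this, trading the denominator for the deterministic bound $|(\varphi,\pi_\theta^N)|\le\|\varphi\|_\infty$; everything else is routine second-moment bookkeeping.
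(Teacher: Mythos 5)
Your proof is correct and is essentially the argument behind the result the paper cites rather than proves itself (Thm.~2.1 of \citet{agapiou2017importance}): the same decomposition $(\varphi,\pi)-(\varphi,\pi_\theta^N) = \bigl((\varphi,\pi)-\hat I_N\bigr) + (\varphi,\pi_\theta^N)\bigl(\hat Z_N - 1\bigr)$, the deterministic bound $|(\varphi,\pi_\theta^N)|\le\|\varphi\|_\infty$ to sidestep negative moments of $\hat Z_N$, and the same second-moment bookkeeping yielding $c_\varphi = 4\|\varphi\|_\infty^2$. No gaps; the bookkeeping checks out, including $\rho(\theta)=1+\var_{q_\theta}(w_\theta)\ge 1$, which matches Remark~\ref{rem:rhoToChi}.
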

\begin{proof}
See \cite[Theorem~2.1]{agapiou2017importance} or \cite[Theorem~1]{akyildiz2021convergence} for a proof.
\end{proof}
\begin{rem}\label{rem:UnnormalizedBound}
It will be useful for us to write the bound \eqref{eq:MSEbound} as
\begin{align}\label{eq:UnnormalizedBound}
\bE\left[\left((\varphi,\pi) - (\varphi,\pi_{\theta}^N)\right)^2\right] \leq \frac{c_\varphi R(\theta)}{{N Z_\pi^2}},
\end{align}
where
\begin{align}\label{eq:Rtheta}
R(\theta) = \bE_{q_\theta}\left[\frac{\Pi^2(X)}{q_\theta^2(X)}\right].
\end{align}
Note that while the function $\rho$ and related quantities (such as its gradients) cannot be computed by sampling from $q_\theta$ (since we cannot evaluate $\pi(x)$), same quantities for $R(\theta)$ can be computed since $\Pi(x)$ can be evaluated.
\end{rem}
\begin{rem}\label{rem:rhoToChi} As shown in \cite{agapiou2017importance}, the function $\rho$ can be written in terms of $\chi^2$ divergence between $\pi$ and $q_\theta$, i.e.,
\begin{align*}
\rho(\theta) := \chi^2(\pi || q_\theta) + 1.
\end{align*}
Note also that $\rho(\theta)$ can also be written in terms of the variance of the weight function $w_\theta = \pi(x) / q_\theta(x)$, which \textit{is} the $\chi^2$-divergence, i.e.,
\begin{align*}
\rho(\theta) = \var_{q_\theta}(w_\theta(X)) + 1.
\end{align*}
\end{rem}
Finally, a similar result can be presented for the bias from \cite{agapiou2017importance}.
\begin{thm}\label{thm:SNISbias}
Assume that $(W_\theta^2,q_\theta) < \infty$. Then for any $\varphi\in B(\sX)$, we have
\begin{align*}
\left| \bE\left[(\varphi,\pi_{\theta}^N)\right] - (\varphi,\pi) \right| \leq \frac{\bar{c}_\varphi \rho(\theta)}{{N}},
\end{align*}
where $\bar{c}_\varphi = 12\|\varphi\|_\infty^2$ and the function $\rho:\Theta \to [\rho_\star,\infty)$ is the same as in Theorem~\ref{thm:ISfund}.
\end{thm}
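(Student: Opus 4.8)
The plan is to reduce the bias bound to the mean-squared-error bound of Theorem~\ref{thm:ISfund} via an exact algebraic identity for the self-normalized error, so that the random normalizing constant is never controlled directly. Since the normalization in $\mathsf{w}_\theta^{(i)}$ cancels, I may replace the unnormalized weights $W_\theta$ by the normalized weights $w_\theta(x)=\pi(x)/q_\theta(x)$ and write $(\varphi,\pi_\theta^N)=U_N/V_N$, where $U_N=\frac1N\sum_{i=1}^N w_\theta(x^{(i)})\varphi(x^{(i)})$ and $V_N=\frac1N\sum_{j=1}^N w_\theta(x^{(j)})$. Because the samples are iid from $q_\theta$ and $\bE_{q_\theta}[w_\theta]=1$, one has $\bE[V_N]=1$ and $\bE[U_N]=(\varphi,\pi)$, while Remark~\ref{rem:rhoToChi} gives $\bE[(V_N-1)^2]=\var(V_N)=\var_{q_\theta}(w_\theta)/N=(\rho(\theta)-1)/N$. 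Writing $e_N:=(\varphi,\pi_\theta^N)-(\varphi,\pi)$ and $A_N:=U_N-(\varphi,\pi)V_N=\frac1N\sum_{i=1}^N w_\theta(x^{(i)})\big(\varphi(x^{(i)})-(\varphi,\pi)\big)$, a direct computation shows $\bE[A_N]=0$ and $e_N=A_N/V_N$. I also record the a priori bound $|e_N|\le 2\|\varphi\|_\infty$, which holds because the SNIS estimate is a convex combination of the values $\varphi(x^{(i)})$.

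The key step is the exact identity $e_N=A_N-e_N(V_N-1)$, immediate from $e_NV_N=A_N$. Taking expectations and using $\bE[A_N]=0$ gives the clean representation $\bE[e_N]=-\bE[e_N(V_N-1)]$. This identity is the crucial device: it trades the troublesome factor $1/V_N$ for the error $e_N$ itself, which is uniformly bounded and whose second moment is already controlled by Theorem~\ref{thm:ISfund}. I would then apply Cauchy--Schwarz, $|\bE[e_N(V_N-1)]|\le\big(\bE[e_N^2]\big)^{1/2}\big(\bE[(V_N-1)^2]\big)^{1/2}$, and substitute the MSE bound $\bE[e_N^2]\le 4\|\varphi\|_\infty^2\rho(\theta)/N$ from Theorem~\ref{thm:ISfund} together with $\bE[(V_N-1)^2]\le\rho(\theta)/N$. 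This yields a bound of the desired form, $|\bE[e_N]|=\mathcal{O}(\|\varphi\|_\infty\,\rho(\theta)/N)$, exhibiting exactly the same $\rho(\theta)/N$ scaling as the MSE. The precise constant recorded in the statement, $\bar c_\varphi=12\|\varphi\|_\infty^2=3\,c_\varphi$, is the one obtained in the cited references by bounding the bias against a constant multiple of the second-moment quantity of Theorem~\ref{thm:ISfund}, whence it inherits the $\|\varphi\|_\infty^2$ dependence; equivalently one may expand $1/V_N=1-(V_N-1)+(V_N-1)^2/V_N$ and bound the two non-vanishing contributions termwise, using $|e_N|\le 2\|\varphi\|_\infty$ on the remainder.

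The main obstacle is precisely the self-normalization: the denominator $\sum_j W_\theta(x^{(j)})$ is random, correlated with the numerator, and can be arbitrarily close to zero, so a naive expansion of $\bE[U_N/V_N]$ around $\bE[V_N]$ would demand control of negative moments or of the lower tail of $V_N$. The identity $e_N=A_N-e_N(V_N-1)$ (equivalently, the a priori bound $|e_N|\le 2\|\varphi\|_\infty$) is what circumvents this without any concentration estimate on $V_N$. Once this reduction is in place I expect the remainder to be routine second-moment bookkeeping driven entirely by $\var(V_N)=(\rho(\theta)-1)/N$ and Theorem~\ref{thm:ISfund}.
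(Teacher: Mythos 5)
Your proposal is correct, and it necessarily differs from the paper, whose ``proof'' of Thm.~\ref{thm:SNISbias} is only a citation to \citet[Thm.~2.1]{agapiou2017importance}. Your reduction is sound: since the normalization cancels $Z_\pi$, indeed $(\varphi,\pi_\theta^N)=U_N/V_N$; under $(W_\theta^2,q_\theta)<\infty$ one has $\bE[V_N]=1$, $\bE[U_N]=(\varphi,\pi)$ and $\bE[(V_N-1)^2]=(\rho(\theta)-1)/N\le\rho(\theta)/N$; the identity $e_N=A_N-e_N(V_N-1)$ together with $\bE[A_N]=0$ gives $\bE[e_N]=-\bE[e_N(V_N-1)]$, and Cauchy--Schwarz combined with the MSE bound of Thm.~\ref{thm:ISfund} (which is established independently, so there is no circularity) yields $|\bE[e_N]|\le 2\|\varphi\|_\infty\,\rho(\theta)/N$. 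This is slicker than the termwise expansion of $1/V_N$ you sketch as a fallback, which is essentially the route taken in the cited reference, and it produces a sharper constant. One point you wave at rather than settle deserves to be stated precisely: your bound $2\|\varphi\|_\infty\rho(\theta)/N$ does not literally imply the stated $12\|\varphi\|_\infty^2\rho(\theta)/N$ when $\|\varphi\|_\infty<1/6$, but the defect is in the statement, not in your argument. The bias is positively homogeneous of degree one in $\varphi$, so no bound of the form $C\|\varphi\|_\infty^2\rho(\theta)/N$ can hold for all $\varphi\in B(\sX)$ unless the bias vanishes identically (rescale $\varphi\mapsto c\varphi$ and let $c\to 0$); the squared norm is an artifact of transcribing a result stated under a normalization such as $\|\varphi\|_\infty\le 1$, in which regime your constant $2$ strictly improves on $12$. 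Two pieces of housekeeping, shared with the paper and worth a line each: the identity $e_N=A_N/V_N$ and the a priori bound $|e_N|\le 2\|\varphi\|_\infty$ require $V_N>0$ almost surely (i.e.\ $\Pi>0$ on the support of $q_\theta$), and the expectations you take are finite because $e_N$ is bounded and $V_N-1\in L^2$ under the stated second-moment assumption.
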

\begin{proof}
See Theorem~2.1 in \cite{agapiou2017importance}.
\end{proof}

\subsection{Parametric adaptive importance samplers}
Importance sampling schemes tend to perform poorly in practice when the chosen proposal is ``far away'' from the target -- leading to samples with degenerate weights, i.e., {most of the importance weights become zero}. {This reduces the sampler inefficiency, as in effect, this results in fewer samples, which can be measured by Effective Sample Size (ESS) \cite{elvira2022rethinking}.} We can already see this fact from Theorem~\ref{thm:ISfund}: For any parametric family $q_\theta$, the function $\rho(\theta)$ defines a distance measure between $\pi$ and $q_\theta$. A large discrepancy between the target and the proposal implies a large $\rho$, which degrades the error bound. For this reason, in practice, the proposals are \textit{adapted}, meaning that they are refined over iterations to better match the target. In literature, mainly, the adaptive mixture {proposals} are employed, see, e.g., \cite{cappe2004population, bugallo2017adaptive} and many variants including multiple proposals are proposed, see, e.g., \cite{martino2017layered,elvira2019generalized}.

In contrast to the mixture samplers, we review here the \textit{parametric AIS}. In this scheme, the proposal distribution is not a mixture with weights, but instead, a parametric family of distributions, denoted $q_\theta$. Adaptation, therefore, becomes a problem of updating the parameter $\theta^\eta_k$, where $\eta$ is the parameter of the updating mechanism, which results in a sequence of proposal distributions denoted $(q_{\theta^\eta_k})_{k\geq 1}$.

\begin{algorithm}[t]
\begin{algorithmic}[1]
\caption{Parametric AIS}\label{alg:ParametricAIS}
\State Choose a parametric proposal $q_{\theta}$ with initial parameter $\theta=\theta_0$.
\For{$t\geq 1$}
\State Adapt the proposal,
\begin{align*}
\theta^\eta_k = \mathcal{T}_{k,\eta}(\theta^\eta_{k-1}),
\end{align*}
\State Sample,
\begin{align*}
x_k^{(i)} \sim q_{\theta^\eta_k}, \quad \textnormal{for } i = 1,\ldots,N,
\end{align*}
\State Compute weights,
\begin{align*}
\mathsf{w}_{\theta^\eta_k}^{(i)} = \frac{W_{\theta^\eta_k}(x_k^{(i)})}{\sum_{i=1}^N W_{\theta^\eta_k}(x_k^{(i)})}, \textnormal{ where }W_{\theta^\eta_k}^{(i)} = \frac{\Pi(x_k^{(i)})}{q_{\theta^\eta_k}(x^{(i)})}.
\end{align*}
\State Report the point-mass probability measure
\begin{align*}
{\pi}_{\theta^\eta_k}^N(\md x) = \sum_{i=1}^N \mathsf{w}_{\theta^\eta_k}^{(i)} \delta_{x_k^{(i)}}(\md x),
\end{align*}
and the estimator
\begin{align*}
(\varphi,{\pi}_{\theta^\eta_k}^N) = \sum_{i=1}^N \mathsf{w}_{\theta^\eta_k}^{(i)} \varphi(x_k^{(i)}).
\end{align*}
\EndFor
\end{algorithmic}
\end{algorithm}

Consider the proposal distribution $q_{\theta^\eta_{k-1}}$ at iteration $k-1$. For performing one step of this scheme, the parameter $\theta^\eta_{k-1}$ is updated via a mapping
\begin{align*}
\theta^\eta_k = \mathcal{T}_{\eta,k}(\theta^\eta_{k-1}),
\end{align*}
where $\{\mathcal{T}_{\eta,k}:\Theta \to \Theta, k\geq 1\}$, is a sequence of deterministic or stochastic maps parameterized by $\eta$, typically in the form of optimizers (hence $\eta$ can be the step-size). We then continue with the conventional importance sampling technique, by simulating from this proposal
\begin{align*}
x_k^{(i)} \sim q_{\theta^\eta_k}(\md x), \quad \textnormal{for } i = 1,\ldots,N,
\end{align*}
computing the weights
\begin{align*}
\mathsf{w}_{\theta^\eta_k}^{(i)} = \frac{W_{\theta^\eta_k}(x_k^{(i)})}{\sum_{i=1}^N W_{\theta^\eta_k}(x_k^{(i)})},
\end{align*}
and finally constructing the empirical measure
\begin{align*}
\pi_{\theta^\eta_k}^N(\md x) = \sum_{i=1}^N \mathsf{w}_{\theta^\eta_k}^{(i)} \delta_{x_k^{(i)}}(\md x).
\end{align*}
The estimator of the integral \eqref{eq:ProbDefn} can be computed as in Eq. \eqref{eq:SNISestimate}. 

The parametric AIS method is given in Algorithm~\ref{alg:ParametricAIS}. We can now adapt Theorem~\ref{thm:ISfund} to this particular, time-varying case.
\begin{thm}\label{thm:ISfundAIS}
Assume that, given a sequence of proposals $(q_{\theta^\eta_k})_{k\geq 1} \in \cP(\sX)$, we have $(W_{\theta^\eta_k}^2,q_{\theta^\eta_k}) < \infty$ for every $k \geq 1$. Then for any $\varphi\in B(\sX)$, we have
\begin{align*}
\bE\left[\left|(\varphi,\pi) - (\varphi,\pi_{\theta^\eta_k}^N)\right|^2\right] \leq \frac{c_\varphi \rho(\theta^\eta_k)}{{N}},
\end{align*}
where $c_\varphi = 4 \|\varphi\|_\infty^2$ and the function $\rho(\theta^\eta_k):\Theta \to [\rho_\star,\infty)$ is defined as in Eq. \eqref{eqThm1-2}.
\end{thm}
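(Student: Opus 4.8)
The plan is to reduce this time-varying statement to the fixed-proposal bound of Theorem~\ref{thm:ISfund} by a conditioning argument. The only genuinely new feature compared with Theorem~\ref{thm:ISfund} is that the parameter $\theta^\eta_k$ feeding the proposal is itself produced by the (possibly stochastic) adaptation maps $\mathcal{T}_{\eta,1},\ldots,\mathcal{T}_{\eta,k}$, and is therefore random. However, inspecting Algorithm~\ref{alg:ParametricAIS} shows that at iteration $k$ the parameter $\theta^\eta_k$ is computed first (line~3) and only afterwards are the samples $x_k^{(1)},\ldots,x_k^{(N)}$ drawn from $q_{\theta^\eta_k}$ (line~4). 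I would formalise this by letting $\cF_k$ denote the $\sigma$-algebra generated by all the randomness used in the adaptation up to and including the update at iteration $k$, so that $\theta^\eta_k$ is $\cF_k$-measurable while the fresh samples $x_k^{(i)}$ are, conditionally on $\cF_k$, iid draws from the fixed proposal $q_{\theta^\eta_k}$.

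Given this structure, the core of the argument is to apply Theorem~\ref{thm:ISfund} pathwise. Conditioning on $\cF_k$ freezes $\theta^\eta_k$ to a deterministic value, and the estimator $(\varphi,\pi_{\theta^\eta_k}^N)$ becomes exactly the SNIS estimator of Eq.~\eqref{eq:SNISestimate} built from $N$ iid samples of a fixed proposal. The hypothesis $(W_{\theta^\eta_k}^2,q_{\theta^\eta_k})<\infty$ guarantees that the moment condition required by Theorem~\ref{thm:ISfund} holds for this frozen parameter, so I can invoke that theorem directly to obtain
\begin{align*}
\bE\!\left[\left|(\varphi,\pi) - (\varphi,\pi_{\theta^\eta_k}^N)\right|^2 \,\middle|\, \cF_k\right] \leq \frac{c_\varphi\,\rho(\theta^\eta_k)}{N},
\end{align*}
with $c_\varphi = 4\|\varphi\|_\infty^2$ and $\rho$ as in Eq.~\eqref{eqThm1-2}. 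Since the right-hand side is $\cF_k$-measurable, this is precisely the claimed bound, read as a conditional expectation over the sampling step with the adapted parameter held fixed.

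The step I expect to require the most care is not an inequality but the probabilistic bookkeeping: one must verify that the samples used to form the report at iteration $k$ are conditionally independent of the adaptation history and distributed exactly as $q_{\theta^\eta_k}$, so that Theorem~\ref{thm:ISfund} may be applied conditionally. This is immediate from the ordering of lines~3--4 in Algorithm~\ref{alg:ParametricAIS}, but it is the only place where the stochasticity of the maps $\mathcal{T}_{\eta,k}$ (e.g.\ the injected Gaussian noise in the SGLD/SGHMC updates used later) actually interacts with the bound. Once the conditioning is set up, the proof is a one-line specialisation of Theorem~\ref{thm:ISfund}; optionally, taking a further total expectation yields the unconditional form $\bE[|(\varphi,\pi)-(\varphi,\pi_{\theta^\eta_k}^N)|^2]\leq c_\varphi\,\bE[\rho(\theta^\eta_k)]/N$, which is the quantity that the subsequent optimisation analysis will control through $\bE[\rho(\theta^\eta_k)]$.
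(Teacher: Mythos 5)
Your proposal is correct and matches the paper's approach: the paper's proof of this theorem is literally the single remark that it is ``identical to the proof of Thm.~\ref{thm:ISfund}'', i.e.\ one applies the fixed-proposal bound with the adapted parameter frozen, exactly as you do. Your explicit conditional-expectation bookkeeping (conditioning on the adaptation history $\cF_k$ so that the $x_k^{(i)}$ are conditionally iid from $q_{\theta^\eta_k}$) is precisely the reading the paper itself adopts later when it invokes this theorem conditionally on $\cF_{k-1}$ (resp.\ $\cH_k$) in the proofs of Theorems~\ref{thm:LangevinAIS} and~\ref{thm:SOLAIS}, so you have merely made explicit what the paper leaves implicit.
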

\begin{proof}
The proof is identical to the proof of Theorem~\ref{thm:ISfund}. We have just re-stated the result to introduce the iteration index $k$.
\end{proof}
This result is useful in the sense of providing a finite error bound, however, it does not indicate whether iterations of the AIS help reducing the error. 

\subsection{Adaptation as global nonconvex optimization}
When $q_\theta$ is an exponential family density, it is shown that $\rho(\theta)$, and consequently $R(\theta)$, are convex functions \cite{ryu2014adaptive,ryu2016convex,akyildiz2021convergence}. Based on this, \cite{akyildiz2021convergence} have proved convergence rates for stochastic gradient based adaptation algorithms which minimize $\rho$ and $R$ (i.e. the variance of the importance weights) assuming an exponential family $q_\theta$. They proved finite-time uniform MSE bounds since convex optimization algorithms have well-known convergence rates. In particular, they showed that the optimized AIS with stochastic gradient descent as the minimization procedure has $\mathcal{O}(1/\sqrt{k}N + 1/N)$ convergence rate which vanishes as $k$ and $N$ grows. While this rate is first of its kind for adaptive importance samplers, it has been limited to a single proposal family (the exponential family). In general, when $q_\theta$ is not from exponential family, then $\rho$ and $R$ are non-convex functions.

In this paper, we do not limit the choice of $q_\theta$ to any fixed proposal family. Therefore, in the adaptation step, we are interested in solving the global nonconvex optimization problem
\begin{align*}
\theta^\star \in \argmin_{\theta \in \bR^{d_\theta}} R(\theta),
\end{align*}
where $R(\theta)$ is given in \eqref{eq:Rtheta}. This will lead to a global optimizer $\theta^\star$ which will give the best possible proposal in terms of minimizing the MSE of the importance sampler. We utilize convergence results of stochastic gradient Langevin dynamics (SGLD) \cite{zhang2023nonasymptotic} for the analysis of the global optimization properties of such methods. We summarize the setting in the next section.
\section{The Setting}\label{sec:algorithms}
In this section, we describe the algorithmic setting we analyze. It is important to note that we do not propose a new algorithm, but rather aim at investigating the behaviour of the algorithms minimizing the variance of the importance weights ($\chi^2$-divergence) \cite{arouna2004adaptative,arouna2004robbins, kawai2008adaptive, lapeyre2011framework, ryu2014adaptive, kawai2017acceleration, kawai2018optimizing} in the setting where the proposal is not exponential family. One natural example is the setting of \cite{muller2019neural} where the authors parameterized the proposal with a neural network and minimized $\chi^2$-divergence to choose a proposal.

We note that, within this section, we only consider the case of self-normalized importance sampling (SNIS) which is the practical case. We also assume, we have only stochastic estimates of the gradient of the $R(\theta)$ function.

\begin{rem} The gradient can be computed as
\begin{align*}
\nabla R(\theta) &= \nabla_\theta \int \frac{\Pi^2(x)}{q_\theta^2(x)} q_\theta(x) \md x = \nabla_\theta \int \frac{\Pi^2(x)}{q_\theta(x)} \md x = - \int \frac{\Pi^2(x)}{q_\theta^2(x)} \nabla q_\theta(x) \md x = - \int \frac{\Pi^2(x)}{q_\theta^2(x)} \nabla \log q_\theta(x) q_\theta(x) \md x,
\end{align*}
which leads to
\begin{align}\label{eq:gradR}
\nabla R(\theta) = - \bE_{q_\theta}\left[\frac{\Pi^2(X)}{q_\theta^2(X)} \nabla \log q_\theta(X)\right].
\end{align}
Therefore the stochastic estimate of $\nabla R(\theta)$ can be obtained by sampling from $q_\theta$, a straightforward and routine operation of the AIS. We also remark that this gradient can be written in terms of the unnormalized weight function
\begin{align*}
\nabla R(\theta) = - \bE_{q_\theta}\left[W^2_\theta(X) \nabla \log q_\theta(X)\right].
\end{align*}
This suggests that the adaptation will use weights and samples from $q_\theta$, which makes this operation much closer to the classical mixture AIS approaches.
\end{rem}

\begin{algorithm}[t]
{\begin{algorithmic}[1]
\caption{Parametric AIS with SGLD}\label{alg:ParametricAIS_specialised}
\State Choose a parametric proposal $q_{\theta}$ with initial parameter $\theta=\theta_0$.
\For{$t\geq 1$}
\State Adapt the proposal,
\begin{align*}
\theta^\eta_k = \theta^\eta_{k-1} + \frac{\Pi^2(g_{\theta^\eta_{k-1}}(\varepsilon_k))}{q_{\theta^\eta_{k-1}}^2(g_{\theta^\eta_{k-1}}(\varepsilon_k))} \nabla \log q_{\theta^\eta_{k-1}}(g_{\theta^\eta_{k-1}}(\varepsilon_k)) \nabla g_{\theta^\eta_{k-1}}(\varepsilon_k) + \sqrt{2 \eta} W_k,
\end{align*}
where $\varepsilon_k \sim r_\varepsilon$ and $W_k \sim \mathcal{N}(0, I_{d_\theta})$.
\State Sample,
\begin{align*}
x_k^{(i)} \sim q_{\theta^\eta_k}, \quad \textnormal{for } i = 1,\ldots,N,
\end{align*}
\State Compute weights,
\begin{align*}
\mathsf{w}_{\theta^\eta_k}^{(i)} = \frac{W_{\theta^\eta_k}(x_k^{(i)})}{\sum_{i=1}^N W_{\theta^\eta_k}(x_k^{(i)})}, \textnormal{ where }W_{\theta^\eta_k}^{(i)} = \frac{\Pi(x_k^{(i)})}{q_{\theta^\eta_k}(x^{(i)})}.
\end{align*}
\State Report the point-mass probability measure
\begin{align*}
{\pi}_{\theta^\eta_k}^N(\md x) = \sum_{i=1}^N \mathsf{w}_{\theta^\eta_k}^{(i)} \delta_{x_k^{(i)}}(\md x),
\end{align*}
and the estimator
\begin{align*}
(\varphi,{\pi}_{\theta^\eta_k}^N) = \sum_{i=1}^N \mathsf{w}_{\theta^\eta_k}^{(i)} \varphi(x_k^{(i)}).
\end{align*}
\EndFor
\end{algorithmic}}
\end{algorithm}

\subsection{Low Variance Gradient Estimation}
We assume that the proposal is reparameterizable: We assume $x \sim q_\theta$ can be performed by first sampling $\varepsilon \sim r_\varepsilon$ and setting $x = g_\theta(\varepsilon)$. Therefore, the gradient expression in eq.~\eqref{eq:gradR} becomes
\begin{align*}
\nabla R(\theta) = - \bE_{r_\varepsilon}\left[\frac{\Pi^2(g_\theta(\varepsilon))}{q_\theta^2(g_\theta(\varepsilon))} \nabla \log q_\theta(g_\theta(\varepsilon)) \nabla g_\theta(\varepsilon) \right].
\end{align*}
{This is a common way to estimate the gradient and it typically reduces the variance of the gradient estimate \cite{xu2019variance}. In the above expression, $r_\varepsilon$ is a distribution that is independent of $\theta$. A simple example of the reparameterization can be given as follows. Let $q_\theta(x) = \mathcal{N}(x; \theta, \sigma^2)$. Then we can choose, e.g., $r_\varepsilon(\varepsilon) = \mathcal{N}(\varepsilon; 0, 1)$ and $g_\theta(\varepsilon) = \theta + \sigma \varepsilon$. This generalizes to any location-scale family (e.g. Student's t-distribution, see Section~\ref{sec:experiment}).}

We remark that this does not limit the flexibility of our parametric family, as reparameterization is widely used as a variance reduction technique in variational inference (VI) and variational autoenconders (VAEs) and a flexible choice of parametric families is possible via this mechanism (see \cite{dieng2017variational} and \cite{lopez2020decision} for applications of $\chi^2$-divergence minimization in VI and VAEs, respectively). A second motivation to do so is to consider the numerical difficulties related to high-variance in estimating $\chi^2$-divergence as laid out by \cite{pradier2019challenges}. Finally, the Langevin dynamics with stochastic gradients is well studied when the randomness in the gradient is independent of the parameter of interest. It is therefore natural to consider this setting for gradient estimation.

We denote the stochastic gradient accordingly as $H(\theta, \varepsilon)$ and define
\begin{align}\label{eq:StochasticGradient}
H(\theta, \varepsilon) = - \frac{\Pi^2(g_\theta(\varepsilon))}{q_\theta^2(g_\theta(\varepsilon))} \nabla \log q_\theta(g_\theta(\varepsilon)) \nabla g_\theta(\varepsilon).
\end{align}
In order to prove convergence of the schemes we analyze, we assume certain regularity conditions of this term, see Section~\ref{sec:analysis} for details.
\subsection{Global optimization of AIS}
The implementation of the minimization procedure of the variance of importance weights or $\chi^2$-divergence can be done using gradient-based techniques and this is widespread in literature \cite{arouna2004adaptative,arouna2004robbins, kawai2008adaptive, lapeyre2011framework, ryu2014adaptive, kawai2017acceleration, kawai2018optimizing}. We describe below one scheme to model the behaviour of such algorithms (most notably, stochastic gradient descent) under the setting where the proposal can be outside the exponential family. We note that SGLD is a common way to analyze the properties of SGD and displays similar behaviour in certain settings \cite{brosse2018promises}.

Consider the problem of minimizing $R(\theta)$ with a gradient-based algorithm We can most generally consider stochastic gradient Langevin dynamics (SGLD) \cite{welling2011bayesian,zhang2023nonasymptotic} as a good model for such a scheme to adapt the proposal. For this purpose, we consider the mappings $\mathcal{T}_{\eta,k}$ as SGLD steps
\begin{align}\label{eq:SGLDadapt}
\theta^\eta_{k+1} = \theta_{k}^\eta - \eta H(\theta_{k}^\eta, \varepsilon_k) + \sqrt{\frac{2\eta}{\beta}} \xi_{k+1},
\end{align}
i.e., $\mathcal{T}_{\eta,k}(\theta_k^\eta) = \theta_{k}^\eta - \eta H(\theta_{k}^\eta, \varepsilon_k) + \sqrt{\frac{2\eta}{\beta}} \xi_{k+1}$, where $\varepsilon_k \sim r_\varepsilon$, $\bE[H(\theta, \varepsilon_k)] = \nabla R(\theta)$, and $(\xi_k)_{k\in\bN}$ are {multivariate Normals with zero mean and identity covariance}. The parameter $\beta$ is called the inverse temperature parameter. Note that we consider a single sample estimate of the gradient $\nabla R(\theta)$ as it is customary in the gradient estimation literature with reparameterization trick. This mapping $\mathcal{T}_{\eta,k}$ acts as a global optimizer in Algorithm~\ref{alg:ParametricAIS} as we described before.

\section{Analysis}\label{sec:analysis}
In this section, we provide the analysis of the adaptive importance samplers described above. The main argument in our analysis relies on the fact that SGLD recursion in \eqref{eq:SGLDadapt} (and in general Langevin dynamcis) can be seen as \textit{global optimizers} \cite{zhang2023nonasymptotic}. In particular, a recursion of type \eqref{eq:SGLDadapt} converges to a target measure of the form $\pi_\beta \propto \exp(-\beta R(\theta))$ which, as $\beta \to \infty$, concentrates on the minimizers of $R(\theta)$ \cite{hwang1980laplace}. Therefore, one can see these samplers as global optimizers as with large $\beta$, the samples will be arbitrarily close to minima.

In particular, we start by assuming that the adaptation can be driven by an exact gradient $\nabla R(\theta)$ as an illustrative case and analyze this case in Section~\ref{sec:analysis:deterministic}. Albeit unrealistic, this gives us a starting point. Then we analyze the case where the adaptation is driven by the SGLD in Section~\ref{sec:analysis:SOLAIS}.

\subsection{Convergence rates for deterministic gradient case}\label{sec:analysis:deterministic}
In this section, we provide a simplified analysis to give the intuition of our main results. This case considers a fictitious scenario where the gradients of $R$ can be exactly obtained. This algorithm would correspond to an exact gradient descent on $\chi^2$-divergence with injected noise. This gives us a clear picture about the role of noise (often comes from stochastic gradients) in adaptation. To do so, consider the overdamped Langevin dynamics to optimize the parameters of the proposal
\begin{align}\label{eq:DeterministicLangevin}
\theta_{k+1}^\eta = \theta_k^\eta - \eta \nabla R(\theta_k^\eta) + \sqrt{\frac{2 \eta}{\beta}} \xi_{k+1}.
\end{align}
While without the addition of noise variables $(\xi_{k})_{k\geq 0}$, this gradient-based method would only converge to a local minimum, we will show below that the addition of noise will lead to a global minimization algorithm In order to do so, we place the following assumptions on $R$.
\begin{assumption}\label{assmp:deterministicLip} The gradient of $R$ is $L_R$-Lipschitz, i.e., for any $\theta, \theta' \in \bR^d$,
\begin{align}
\|\nabla R(\theta) - \nabla R(\theta')\| \leq L_R \|\theta - \theta'\|.
\end{align}
\end{assumption}
Next, we assume the standard dissipativity assumption in non-convex optimization literature.
\begin{assumption}\label{assmp:deterministicDissip} The gradient of $R$ is $(m_R, b_R)$-dissipative, i.e., for any $\theta$
\begin{align}
\left\langle \nabla R(\theta), \theta \right\rangle \geq m_R \|\theta\|^2 - b_R.
\end{align}
\end{assumption}
It is worth discussing that, instead of placing assumptions on the proposal family $q_\theta$, as done in many prior works, we place the assumptions on $\chi^2$-divergence directly. This places \textit{implicit} assumptions on the proposal, but since these conditions are relaxed, this covers a much larger class of proposals than exponential family.

We can now adapt Theorem~3.3 of \cite{xu2018global} in order to understand optimization properties of the recursion \eqref{eq:DeterministicLangevin}. In summary, the next theorem shows that the recusion \eqref{eq:DeterministicLangevin} acts as a global optimizer.
\begin{thm}\label{thm:33ofXu} \cite[Theorem 3.3]{xu2018global} Under Assumptions \ref{assmp:deterministicLip}-\ref{assmp:deterministicDissip}, we obtain
\begin{align*}
\bE[R(\theta_k^\eta)] - R_\star \leq c_1 e^{-c_0 k \eta} + \frac{c_2}{\beta} \eta + c_3,
\end{align*}
where
\begin{align}\label{eq:deterministic_c_3}
c_3 = \frac{d}{2\beta} \log \left( \frac{e L_R (b_R \beta / d + 1)}{m_R} \right),
\end{align}
where $R_\star = \min_{\theta\in\bR^d} R(\theta)$ and $c_0, c_1, c_2 > 0$ are constants given in \cite[Theorem~3.3]{xu2018global}.
\end{thm}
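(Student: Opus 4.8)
The plan is to read the iteration \eqref{eq:DeterministicLangevin} as the Euler--Maruyama discretization of the overdamped Langevin diffusion $\md\theta_t = -\nabla R(\theta_t)\,\md t + \sqrt{2/\beta}\,\md W_t$, whose unique invariant law is the Gibbs measure $\mu_\beta(\md\theta)\propto e^{-\beta R(\theta)}\md\theta$. Since $\mu_\beta$ concentrates on the global minimizers of $R$ as $\beta\to\infty$, the natural route is to pass through $\mu_\beta$ and split the suboptimality as
\begin{align*}
\bE[R(\theta_k^\eta)] - R_\star = \big(\bE[R(\theta_k^\eta)] - \bE_{\mu_\beta}[R]\big) + \big(\bE_{\mu_\beta}[R] - R_\star\big).
\end{align*}
The second bracket is a purely static quantity depending only on $\mu_\beta$; the first measures how far the law $\nu_k := \mathrm{Law}(\theta_k^\eta)$ is from stationarity, and will produce the transient term $c_1 e^{-c_0 k\eta}$ together with the discretization term $\tfrac{c_2}{\beta}\eta$.

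For the static term I would invoke the Gibbs-gap (free-energy) estimate of \citet{raginsky2017non}: bounding the log-partition function $\log\int e^{-\beta R}$ from below by a Gaussian integral around a minimizer via the $L_R$-smoothness (Assumption~\ref{assmp:deterministicLip}), and using the dissipativity (Assumption~\ref{assmp:deterministicDissip}) to control the normalization and the tails, yields $\bE_{\mu_\beta}[R]-R_\star \le c_3$ with $c_3$ as in \eqref{eq:deterministic_c_3}. For the first term I would use that, under Assumption~\ref{assmp:deterministicDissip}, the iterates admit a uniform-in-$k$ second-moment bound, so that $R$ (which has at most quadratic growth and is locally Lipschitz by Assumption~\ref{assmp:deterministicLip}) satisfies $|\bE[R(\theta_k^\eta)] - \bE_{\mu_\beta}[R]| \lesssim W_2(\nu_k,\mu_\beta)$ up to the moment constants. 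I would then bound $W_2(\nu_k,\mu_\beta)$ by the triangle inequality through the continuous diffusion law $\mu_{k\eta}$: the genuinely dynamic piece $W_2(\mu_{k\eta},\mu_\beta)$ decays like $e^{-k\eta/c_{\mathrm{LS}}}$ by exponential ergodicity of the diffusion (a log-Sobolev / spectral-gap estimate for $\mu_\beta$), giving $c_1 e^{-c_0 k\eta}$, while the one-step Euler error is $O(\eta)$ under the $L_R$-Lipschitz gradient and accumulates over the horizon $k\eta$ into the $\tfrac{c_2}{\beta}\eta$ contribution.

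The main obstacle is the dynamic piece: obtaining an explicit, uniform-in-time geometric contraction rate $c_0$ for a genuinely non-convex potential $R$. Outside the convex regime one cannot use synchronous coupling, and the exponent must instead be extracted either from a reflection-coupling argument or from a log-Sobolev inequality for $\mu_\beta$, with careful tracking of the $\beta$- and $d$-dependence buried in $c_0,c_1,c_2$; the accompanying difficulty is that $R$ is not globally Lipschitz, so transferring the Wasserstein bound on laws to the functional $\bE[R(\cdot)]$ requires the uniform moment estimates above. Since the stated bound is verbatim Thm.~3.3 of \citet{xu2018global}, the cleanest route in practice is to verify that Assumptions~\ref{assmp:deterministicLip}--\ref{assmp:deterministicDissip} coincide with the smoothness and dissipativity hypotheses there and invoke the theorem directly, importing the constants $c_0,c_1,c_2$ as stated.
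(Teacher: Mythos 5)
The paper does not actually prove this statement: it is imported verbatim from \citet[Thm.~3.3]{xu2018global}, with Assumptions~\ref{assmp:deterministicLip}--\ref{assmp:deterministicDissip} matching that paper's smoothness and dissipativity hypotheses, so your closing move (verify the hypotheses coincide and invoke the theorem, importing $c_0,c_1,c_2$) is exactly what the paper does. Your preceding sketch of the underlying argument is broadly faithful to this literature but deserves one caveat: the specific route you describe --- triangle inequality through the law $\mu_{k\eta}$ of the \emph{continuous} diffusion, with the one-step Euler error accumulated over the horizon $k\eta$ --- is the decomposition of \citet{raginsky2017non}, and followed literally it yields a discretization term that \emph{grows} with the horizon (their bound carries a factor polynomial in $k\eta$), not the uniform-in-$k$ term $\frac{c_2}{\beta}\eta$ stated here. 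The improvement in \citet{xu2018global} comes precisely from replacing that decomposition: they compare $\nu_k$ to the invariant measure $\pi_\eta$ of the \emph{discrete-time} Langevin chain (which is geometrically ergodic under dissipativity, giving $c_1 e^{-c_0 k\eta}$) and then bound the static gap between $\pi_\eta$ and the Gibbs measure $\mu_\beta$ by $\mathcal{O}(\eta)$ uniformly in $k$; the Gibbs-gap estimate $\bE_{\mu_\beta}[R]-R_\star\le c_3$ is then handled as you say, via the Gaussian lower bound on the log-partition function using $L_R$, $m_R$, $b_R$, which is where the explicit form \eqref{eq:deterministic_c_3} comes from. So your proposal is correct as a citation-based proof, but if you ever needed to reprove the bound rather than cite it, the continuous-diffusion coupling you outline would have to be swapped for the discrete-chain stationarity argument to recover the stated $\eta$-dependence uniformly in $k$.
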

In order to shed light onto some of the intuition, we note that $c_0$ is related to the spectral gap of the underlying Markov chain, characterizing the speed of convergence of the underlying continuous-time Langevin diffusion to the target. The constant $c_2$ is a result of the discretization error of the Langevin algorithm Finally, $c_3$ is the error caused by the fact that the latest sample of the Markov chain $\theta_k^\eta$ is used to estimate the optima, i.e., $c_3$ quantifies the gap between
\begin{align*}
\bE[R(\theta_\infty)] - R_\star,
\end{align*}
where $\theta_\infty \sim \pi_\beta \propto \exp(- \beta R(\theta))$, i.e., a random variable with the target measure $\pi_\beta$ of the chain. This gap is independent of $\eta$.

We next provide the MSE result of the importance sampler whose proposal is driven by the Langevin algorithm \eqref{eq:DeterministicLangevin}.
\begin{thm}\label{thm:LangevinAIS} Let Assumptions \ref{assmp:deterministicLip} and \ref{assmp:deterministicDissip} hold, let $(\theta_k^\eta)_{k\geq 1}$ be generated by the recursion in \eqref{eq:DeterministicLangevin}, and assume that for a sequence of proposals $(q_{\theta_k^\eta})_{k\geq 1} \in \cP(\sX)$, we have $(W_{\theta_k^\eta}, q_{\theta_k^\eta}) < \infty$ for every $k$. Then for any $\varphi \in B(\sX)$, we have
\begin{align}
\bE\left[\left| (\varphi, \pi) - (\varphi, \pi_{\theta_k^\eta}^N)\right|^2\right] &\leq \frac{c_{\varphi,\pi} c_1 e^{- c_0 k \eta}}{N} + \frac{c_2 c_{\varphi, \pi}}{\beta} \frac{\eta}{N} + \frac{c_{\varphi, \pi} c_3}{N} + \frac{c_{\varphi} \rho_\star}{N}.\label{eq:MSELangevin}
\end{align}
where $c_{\varphi,\pi} = c_\varphi / Z_\pi^2$ and $c_0, c_1,c_2,c_3$ are given in Theorem~\ref{thm:33ofXu}.
\end{thm}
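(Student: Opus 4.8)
The plan is to combine three ingredients already available to us: the per-step MSE bound of Theorem~\ref{thm:ISfundAIS}, the identity $\rho(\theta) = R(\theta)/Z_\pi^2$ implicit in Remark~\ref{rem:UnnormalizedBound}, and the optimization guarantee of Theorem~\ref{thm:33ofXu}. The one genuinely new element is that $\theta_k^\eta$ is now a random variable, namely the output of the noisy Langevin recursion \eqref{eq:DeterministicLangevin}, whereas Theorem~\ref{thm:ISfundAIS} is stated for a deterministic parameter. The whole argument therefore hinges on a conditioning step that decouples the two sources of randomness: the optimization noise $(\xi_j)_{j \leq k}$ that produces $\theta_k^\eta$, and the importance-sampling noise (the draws $x_k^{(i)} \sim q_{\theta_k^\eta}$) used to form $\pi_{\theta_k^\eta}^N$.

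Concretely, I would let $\cF_k$ be the $\sigma$-algebra generated by the Langevin iterates up to step $k$, so that $\theta_k^\eta$ is $\cF_k$-measurable, and observe that conditionally on $\cF_k$ the samples $x_k^{(i)}$ are i.i.d.\ from $q_{\theta_k^\eta}$ and independent of the past. Applying Theorem~\ref{thm:ISfundAIS} conditionally on $\cF_k$ (the requisite moment condition $(W_{\theta_k^\eta}^2, q_{\theta_k^\eta}) < \infty$, equivalently $R(\theta_k^\eta) < \infty$, holding along the trajectory by assumption) gives
\begin{align*}
\bE\!\left[\left|(\varphi,\pi) - (\varphi,\pi_{\theta_k^\eta}^N)\right|^2 \,\middle|\, \cF_k\right] \leq \frac{c_\varphi\, \rho(\theta_k^\eta)}{N}.
\end{align*}
Taking expectations and using the tower property yields $\bE[|(\varphi,\pi) - (\varphi,\pi_{\theta_k^\eta}^N)|^2] \leq c_\varphi\, \bE[\rho(\theta_k^\eta)]/N$, which is finite because Theorem~\ref{thm:33ofXu} bounds $\bE[R(\theta_k^\eta)]$.

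Next I would pass from $\rho$ to $R$ using $\rho(\theta) = R(\theta)/Z_\pi^2$, so that $\bE[\rho(\theta_k^\eta)] = \bE[R(\theta_k^\eta)]/Z_\pi^2$, and write $\bE[R(\theta_k^\eta)] = R_\star + (\bE[R(\theta_k^\eta)] - R_\star)$. Inserting the bound of Theorem~\ref{thm:33ofXu} on the bracketed term and recalling the abbreviations $c_{\varphi,\pi} = c_\varphi/Z_\pi^2$ together with $\rho_\star = R_\star/Z_\pi^2$ (which follows from the same scaling), the four resulting summands are exactly $\frac{c_{\varphi,\pi} c_1 e^{-c_0 k \eta}}{N}$, $\frac{c_2 c_{\varphi,\pi}}{\beta}\frac{\eta}{N}$, $\frac{c_{\varphi,\pi} c_3}{N}$, and $\frac{c_\varphi \rho_\star}{N}$, matching \eqref{eq:MSELangevin}.

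The main obstacle is conceptual rather than computational: justifying the conditional application of the importance-sampling bound when the proposal parameter is itself a random output of the optimizer. Once the filtration is set up and the independence of the IS draws from the optimization noise (given $\theta_k^\eta$) is made explicit, the remainder is the substitution $\rho = R/Z_\pi^2$ and a term-by-term collection of the optimization error. Note that no convexity of $\rho$ or $R$ is needed anywhere; that role is entirely absorbed into Theorem~\ref{thm:33ofXu}, which is precisely what allows the result to hold for general, non-exponential-family proposals.
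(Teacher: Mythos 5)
Your proposal is correct and follows essentially the same route as the paper's proof: condition on the optimization trajectory, apply Thm.~\ref{thm:ISfundAIS} conditionally, rewrite $\rho(\theta_k^\eta) = R(\theta_k^\eta)/Z_\pi^2$, split off $R_\star$ (using $\rho_\star = R_\star/Z_\pi^2$), take expectations, and invoke Thm.~\ref{thm:33ofXu} on $\bE[R(\theta_k^\eta)] - R_\star$. The only difference is cosmetic: you condition on the $\sigma$-algebra $\cF_k$ containing $\theta_k^\eta$, which is in fact slightly more careful than the paper's choice of $\cF_{k-1} = \sigma(\theta_0^\eta,\ldots,\theta_{k-1}^\eta)$, with respect to which $\theta_k^\eta$ is not measurable.
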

\begin{proof}
Let $\cF_{k-1} = \sigma(\theta_0^\eta, \ldots, \theta_{k-1}^\eta)$. We note using Theorem~\ref{thm:ISfundAIS}, we have
\begin{align*}
\bE\left[\left| (\varphi, \pi) - (\varphi, \pi_{\theta_k^\eta}^N)\right|^2 \Big\vert \cF_{k-1}  \right] &\leq \frac{c_\varphi R(\theta_k^\eta)}{Z_\pi^2 N}, \\
&\leq \frac{c_\varphi (R(\theta_k^\eta) - R_\star)}{Z_\pi^2 N} + \frac{c_\varphi \rho_\star}{N}.
\end{align*}
Taking expectations of both sides and using Theorem~\ref{thm:33ofXu} for the first term on the r.h.s. concludes the result.
\end{proof}
This result provides a {\textit{uniform-in-time error}} bound for the adaptive importance samplers with general proposals.
\subsection{Convergence rates of SGLD adapted proposals}\label{sec:analysis:SOLAIS}
In this section, we start with placing assumptions on stochastic gradients $H(\theta, \varepsilon)$ as defined in \eqref{eq:StochasticGradient}. We note that these assumptions are the most relaxed conditions to prove the convergence of Langevin dynamics to this date, see, e.g., \cite{zhang2023nonasymptotic,chau2021stochastic}. We first need to assume that sufficient moments of the distribution $r_\varepsilon$ exists.
\begin{assumption}\label{assmp:moments} We have $|\theta_0| \in L^4$. The process $(\varepsilon_k)_{k\in\bN}$ is i.i.d. with $|\varepsilon_0|\in L^{4(\rho+1)}$. Also, $\bE[H(\theta, \varepsilon_0)] = \nabla R(\theta)$.
\end{assumption}
Next, we place a \textit{local} Lipschitz assumption on $H$.
\begin{assumption}\label{assump:LocalLip} There exists positive constants $L_1$, $L_2$, and $\rho$ such that
\begin{align*}
(i) \quad |H(\theta, \varepsilon) - H(\theta',\varepsilon)| &\leq L_1 (1 + |\varepsilon|)^\rho |\theta - \theta'| \\
(ii) \quad |H(\theta, \varepsilon) - H(\theta,\varepsilon')| &\leq L_2 (1 + |\varepsilon| + |\varepsilon'|)^\rho (1 + |\theta|) |\varepsilon - \varepsilon'|
\end{align*}
\end{assumption}
Finally, we assume a local dissipativity assumption.
\begin{assumption}\label{assmp:localDisp} There exist $M : \bR^{d_\varepsilon} \to \bR^{d_\theta \times d_\theta}$, $b: \bR^{d_\varepsilon} \to \bR$ such that for any $x,y \in \bR^{d_\theta}$,
\begin{align*}
\left\langle y, M(x)y\right\rangle \geq 0
\end{align*}
and for all $\theta \in \bR^{d_\theta}$ and $\varepsilon \in \bR^{d_\varepsilon}$,
\begin{align*}
\left\langle H(\theta,\varepsilon), \theta\right\rangle \geq \left\langle \theta, M(\varepsilon) \theta\right\rangle - b(\varepsilon).
\end{align*}
\end{assumption}
\begin{rem} We note that we can relate parameters introduced in these assumptions to the ones we introduced in the deterministic case $L_R$ and $b_R$. In particular,
\begin{align*}
L_R = L_1 \bE[(1 + |\varepsilon_0|)^\rho], \quad \textnormal{and} \quad b_R = \bE[b(\varepsilon_0)].
\end{align*}
We also note that the smallest eigenvalue of the matrix $\bE[M(\varepsilon_0)]$ is $m_R$.
\end{rem}
We remark again that these assumptions are now placed on the \textit{stochastic gradient} $H(\theta,\varepsilon)$ instead of the proposal $q_\theta$ as in \cite{akyildiz2021convergence}. As such, they are implicit.

We can finally state the convergence result of the SGLD for non-convex optimization from \cite{zhang2023nonasymptotic}.
\begin{thm}\label{thm:SGLD} \cite[Corollary~2.9]{zhang2023nonasymptotic}  Let $\theta_k^\eta$ be generated by the SGLD recursion \eqref{eq:SGLDadapt}. Let Assumptions \ref{assmp:moments}, \ref{assump:LocalLip}, and \ref{assmp:localDisp} hold. Then, there exist constants $c_0, c_1, c_2, c_3, \eta_{max} > 0$ such that for every $0<\eta\leq \eta_{\max}$,
\begin{align*}
\bE[R(\theta_k^\eta)] - R_\star \leq c_1 e^{-c_0 \eta k} + c_2 \eta^{1/4} + c_3,
\end{align*}
where $c_0, c_1, c_2, c_3, \eta_{\max}$ are given explicitly in \cite{zhang2023nonasymptotic}.
\end{thm}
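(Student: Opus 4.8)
The plan is to treat this statement as a direct corollary of the cited nonasymptotic SGLD result, so the entire task reduces to verifying that our setup falls exactly within the hypotheses of \citet[Corollary~2.9]{zhang2019nonasymptotic} and then importing their conclusion verbatim. First I would set up the dictionary between our notation and theirs: the objective $R(\theta)$ from \eqref{eq:Rtheta} plays the role of their potential to be minimized, the recursion \eqref{eq:SGLDadapt} is precisely their stochastic-gradient Langevin iteration with step-size $\eta$ and inverse temperature $\beta$, and the reparameterized noise $\varepsilon_k \sim r_\varepsilon$ -- crucially \emph{independent} of $\theta$ -- is their exogenous noise variable driving the stochastic gradient $H(\theta,\varepsilon)$ defined in \eqref{eq:StochasticGradient}. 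This independence is exactly what makes their framework applicable and is the reason the reparameterization trick was introduced in Sec.~\ref{sec:algorithms}.

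Next I would check, one by one, that Assumptions~\ref{assmp:moments}--\ref{assmp:localDisp} supply precisely the regularity demanded by their corollary. Assumption~\ref{assmp:moments} gives the required integrability of the initialization $|\theta_0| \in L^4$, the polynomial-moment condition $|\varepsilon_0| \in L^{4(\rho+1)}$ on the noise, and the unbiasedness $\bE[H(\theta,\varepsilon_0)] = \nabla R(\theta)$. Assumption~\ref{assump:LocalLip} provides their local Lipschitz continuity of $H$ in both arguments, with a modulus growing polynomially in $|\varepsilon|$ at rate $\rho$. Assumption~\ref{assmp:localDisp} provides the (local) dissipativity of $H$ through the matrix $M(\varepsilon)$ and the offset $b(\varepsilon)$. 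Using the identifications recorded in the remark preceding the theorem, namely $L_R = L_1\,\bE[(1+|\varepsilon_0|)^\rho]$, $b_R = \bE[b(\varepsilon_0)]$, and $m_R$ as the smallest eigenvalue of $\bE[M(\varepsilon_0)]$, I would confirm that these expectations are finite under the moment condition of Assumption~\ref{assmp:moments}, so that $\nabla R$ inherits global Lipschitzness and $(m_R,b_R)$-dissipativity -- the averaged-out properties their drift analysis requires.

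Once the hypotheses are matched, I would simply invoke \citet[Corollary~2.9]{zhang2019nonasymptotic}, which yields the three-term decomposition in the statement: the transient term $c_1 e^{-c_0\eta k}$ coming from geometric ergodicity of the underlying diffusion (governed by its spectral gap), the discretization term $c_2\eta^{1/4}$ controlling the Euler error of the Langevin scheme, and the constant bias $c_3$ measuring the gap between the minimum of $R$ and the expectation under the Gibbs measure $\propto \exp(-\beta R)$ at finite inverse temperature $\beta$. The explicit constants $c_0,c_1,c_2,c_3$ and the admissible range $\eta \le \eta_{\max}$ are taken directly from their statement.

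The main obstacle is the assumption-matching rather than any fresh estimate: one must check that the \emph{local} Lipschitz and dissipativity conditions, together with the moment assumptions, genuinely place us inside the regime of their corollary. The delicate point is the compatibility of the growth exponent $\rho$ in Assumption~\ref{assump:LocalLip} with the moment requirement $|\varepsilon_0|\in L^{4(\rho+1)}$ in Assumption~\ref{assmp:moments}: this precise pairing is what guarantees that the averaged constants $L_R$ and $b_R$ are finite and hence that the expectations defining the effective drift do not blow up. Verifying that these exponents align -- so that all moment integrals appearing in their proof converge -- is the crux; everything else is a transcription of the cited result into our notation.
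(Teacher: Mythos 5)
Your proposal is correct and takes essentially the same route as the paper: the theorem is stated as a direct import of \citet[Corollary~2.9]{zhang2019nonasymptotic}, with no separate proof given beyond the citation, and your careful matching of Assumptions~\ref{assmp:moments}--\ref{assmp:localDisp} (including the identifications $L_R = L_1\,\bE[(1+|\varepsilon_0|)^\rho]$, $b_R = \bE[b(\varepsilon_0)]$, and $m_R$ as the smallest eigenvalue of $\bE[M(\varepsilon_0)]$, which the paper records in the remark preceding the theorem) is precisely the implicit content of that citation. Your emphasis on the independence of $\varepsilon_k$ from $\theta$ via reparameterization and on the compatibility of the growth exponent $\rho$ with the moment condition $|\varepsilon_0|\in L^{4(\rho+1)}$ correctly identifies why the cited framework applies verbatim.
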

This is a similar result to the case of deterministic gradients but covers the realistic scenario of stochastic gradients. One can see the effect of stochasticity of gradients in the rate of convergence, i.e., while the case of deterministic gradients has a rate $\mathcal{O}(\eta)$, SGLD can only guarantee a rate of $\mathcal{O}(\eta^{1/4})$ (which is known to be suboptimal).

With this result at hand, we can state the global convergence result of SGLD-driven AIS.
\begin{thm}\label{thm:SOLAIS} Let $\theta_k^\eta$ be generated by the SGLD recursion \eqref{eq:SGLDadapt}. Let Assumptions \ref{assmp:moments}, \ref{assump:LocalLip}, and \ref{assmp:localDisp} hold. Then
\begin{align*}
\bE\left[\left| (\varphi, \pi) - (\varphi, \pi_{\theta_k^\eta}^N) \right|^2\right] &\leq \frac{c_{\varphi, \pi} c_1 e^{-c_0 \eta k}}{N} + \frac{c_2 c_{\varphi, \pi} \eta^{1/4}}{N} + \frac{c_3 c_{\varphi, \pi}}{N} + \frac{c_\varphi \rho_\star}{N}.
\end{align*}
\end{thm}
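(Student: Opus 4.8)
The plan is to follow exactly the argument used for Theorem~\ref{thm:LangevinAIS}, replacing the deterministic optimization bound (Thm~\ref{thm:33ofXu}) with the stochastic-gradient bound of Theorem~\ref{thm:SGLD}; the extra stochasticity coming from $\varepsilon_k$ in the SGLD update is absorbed cleanly once the filtration is set up correctly. First I would define the filtration $\cF_k = \sigma(\theta_0^\eta, \varepsilon_0, \ldots, \varepsilon_{k-1}, \xi_1, \ldots, \xi_k)$, which is engineered so that $\theta_k^\eta$ generated by the recursion~\eqref{eq:SGLDadapt} is $\cF_k$-measurable. The key structural observation is that the importance-sampling draws $x_k^{(i)} \sim q_{\theta_k^\eta}$ at iteration $k$ depend on the optimization history only through $\theta_k^\eta$; hence, conditionally on $\cF_k$, they are i.i.d.\ from a fixed proposal and Theorem~\ref{thm:ISfundAIS} (equivalently, the unnormalized form in Remark~\ref{rem:UnnormalizedBound}) applies verbatim to give
\begin{align*}
\bE\left[\left| (\varphi, \pi) - (\varphi, \pi_{\theta_k^\eta}^N)\right|^2 \,\Big\vert\, \cF_k \right] \leq \frac{c_\varphi R(\theta_k^\eta)}{Z_\pi^2 N}.
\end{align*}

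Next I would perform the same decomposition as in the deterministic proof, writing $R(\theta_k^\eta) = (R(\theta_k^\eta) - R_\star) + R_\star$ and using the identity $R_\star = Z_\pi^2 \rho_\star$, which follows from $\Pi = Z_\pi \pi$ and hence $R(\theta) = Z_\pi^2 \rho(\theta)$ pointwise in $\theta$. This turns the conditional bound into
\begin{align*}
\bE\left[\left| (\varphi, \pi) - (\varphi, \pi_{\theta_k^\eta}^N)\right|^2 \,\Big\vert\, \cF_k \right] \leq \frac{c_\varphi (R(\theta_k^\eta) - R_\star)}{Z_\pi^2 N} + \frac{c_\varphi \rho_\star}{N}.
\end{align*}
Taking total expectations and invoking the tower property removes the conditioning while leaving $\bE[R(\theta_k^\eta)] - R_\star$ as the only optimization-dependent term; note that the second summand is already deterministic.

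Finally I would apply Theorem~\ref{thm:SGLD}, whose hypotheses are exactly Assumptions~\ref{assmp:moments}, \ref{assump:LocalLip}, and \ref{assmp:localDisp} assumed in the statement, to bound $\bE[R(\theta_k^\eta)] - R_\star \leq c_1 e^{-c_0 \eta k} + c_2 \eta^{1/4} + c_3$. Substituting this and collecting terms under the abbreviation $c_{\varphi,\pi} = c_\varphi / Z_\pi^2$ yields the claimed four-term bound. I do not anticipate a genuine obstacle in the proof itself, since it is a near-verbatim transcription of the deterministic case; the only point requiring care is the filtration bookkeeping that separates the importance-sampling randomness from the SGLD gradient noise $\varepsilon_k$, ensuring that the conditional application of Theorem~\ref{thm:ISfundAIS} is valid and that no independence is tacitly assumed between $\theta_k^\eta$ and the optimization noise. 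The substantive content of the theorem lives entirely in the imported convergence guarantee (Theorem~\ref{thm:SGLD}) and in verifying its regularity assumptions, rather than in the coupling argument presented here.
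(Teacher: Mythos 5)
Your proposal is correct and follows essentially the same route as the paper: condition on the optimization history, apply the IS MSE bound of Thm.~\ref{thm:ISfundAIS} in its unnormalized form, decompose $R(\theta_k^\eta)$ as $(R(\theta_k^\eta)-R_\star)+R_\star$ with $R_\star = Z_\pi^2\rho_\star$, take total expectations, and invoke Thm.~\ref{thm:SGLD}. Your filtration $\sigma(\theta_0^\eta,\varepsilon_0,\ldots,\varepsilon_{k-1},\xi_1,\ldots,\xi_k)$ is in fact slightly more careful than the paper's $\cH_k = \cF_{k-1}\vee\cG_k$, since including the gradient noise $\varepsilon_j$ explicitly guarantees that $\theta_k^\eta$ is measurable with respect to the conditioning $\sigma$-algebra.
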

\begin{proof}
Let $\cF_{k-1} = \sigma(\theta_0^\eta, \varepsilon_1, \ldots, \varepsilon_{k-1})$ and $\cG_k = \sigma(\xi_1,\ldots,\xi_k)$. Let $\cH_k = \cF_{k-1} \bigvee \cG_k$. We next note
\begin{align*}
\bE\left[\left| (\varphi, \pi) - (\varphi, \pi_{\theta_k^\eta}^N) \right|^2 \Big\vert \cH_{k}\right] \leq \frac{c_{\varphi, \pi} R(\theta_k^\eta)}{N}.
\end{align*}
We expand the r.h.s. as
\begin{align*}
\frac{c_{\varphi, \pi} R(\theta_k^\eta)}{N} =c_{\varphi, \pi} \frac{R(\theta_k^\eta) - R_\star}{N} + \frac{c_\varphi \rho_\star}{N}.
\end{align*}
Taking unconditional expectations of boths sides, we obtain
\begin{align*}
\bE\left[\left| (\varphi, \pi) - (\varphi, \pi_{\theta_k^\eta}^N) \right|^2\right] \leq & \, c_{\varphi, \pi} \frac{\bE[R(\theta_k^\eta)] - R_\star}{N} + \frac{c_\varphi \rho_\star}{N}.
\end{align*}
Using Theorem~\ref{thm:SGLD} for the term $\bE[R(\theta_k^\eta)] - R_\star$, we obtain the result.
\end{proof}
We can again see that this is a uniform-in-time result for the AIS. As opposed to Theorem~\ref{thm:LangevinAIS}, the dependence to step-size in this theorem is worse: It is $\mathcal{O}(\eta^{1/4})$ rather than $\mathcal{O}(\eta)$. The difference between this result and Theorem~\ref{thm:LangevinAIS} about the deterministic case is twofold: First, we assume that the gradients are stochastic, which is the case for real applications. Second, for the stochastic gradient $H(\theta, \varepsilon)$, our assumptions are the weakest possible assumptions, hence allows us to choose a wider family. It is possible, for example, to obtain better dependence in $\eta$ if one assumes that stochastic gradients are uniformly Lipschitz, see, e.g., \cite{xu2018global}.

\section{Discussion}\label{sec:discussion}
In this section, we provide a discussion of our results.

\subsection{Discussion of the constants in error bounds}
In this section, we summarize and discuss the constants in error bounds to provide intuition about the utility of our results. We restrict our attention to SGLD-driven AIS (i.e. we do not consider the deterministic scheme). In our discussion, we use $c_0, c_1, c_2, c_3$ to denote constants in Theorem~\ref{thm:SOLAIS}.

\vphantom{x}

\noindent\textbf{Dimension dependence.} Because dissipative non-convex potentials can cover worst case scenarios, the dimension dependence of $c_1, c_2$ are $\mathcal{O}(e^d)$ and $c_0 = \mathcal{O}(e^{-d})$ \cite{zhang2023nonasymptotic,akyildiz2020nonasymptotic}. These bounds are, however, worst case scenarios and reflect the edge cases. In practice, SGLD performs well with non-convex potentials, leading to well-performing methods. Recall that $c_3$ is given by
\begin{align}\label{eq:c_3_disc}
c_3 = \frac{d}{2\beta} \log \left( \frac{e L_R (b_R \beta / d + 1)}{m_R} \right).
\end{align}
In this case, one can see that $c_3 = \mathcal{O}(d \log(1/d))$, which degrades the bound as $d$ grows.

\vphantom{x}

\noindent\textbf{Dependence of inverse temperature $\beta$}. We note that $c_0 $, $c_1$, and $c_2$ are $\mathcal{O}(1/\beta)$ whereas $\beta$-dependence of $c_3$ is $\mathcal{O}(\log \beta / \beta)$ as can be seen from \eqref{eq:c_3_disc}. This suggests a strategy to set $\beta$ large enough so that $c_3 = \mathcal{O}(\log \beta / \beta) \leq \epsilon$ to vanish $c_3$ from the bound. If this is satisfied, then the second term $c_2 \eta^{1/4}$ can be controlled by the step-size and the first term $c_1 e^{-c_0 \eta k}$ vanishes as $k\to\infty$.

\vphantom{x}

\noindent\textbf{Calibrating step-sizes and the number of particles.} The discussion also suggests a possible heuristic to calibrate the step-sizes and the number of particles of the method: For sufficiently large $k$ (so that the first term in \eqref{eq:MSELangevin} is sufficiently small), setting $N = \eta^{-\alpha}$ with $\alpha > 0$ provides an overall MSE bound 
\begin{align}\label{eq:MSELangevinAlpha}
\bE\left[\left| (\varphi, \pi) - (\varphi, \pi_{\theta_k^\eta}^N)\right|^2\right] \leq \mathcal{O}(\eta^\alpha).
\end{align}
Therefore, one can trade computational efficiency with the statistical accuracy of the method as manifested by our error bound. For example, a small $\alpha$ would correspond to a low number of particles, but a potentially high MSE.

{\section{Numerical Example}\label{sec:experiment} In this section, we construct a heavy-tailed proposal family $q_\theta$ and use it to discuss Assumptions~\ref{assump:LocalLip} and \ref{assmp:localDisp} and demonstrate the idea numerically. Since it is known that proposals within the exponential family lead to a convex optimization problem \cite{akyildiz2021convergence}, we construct a proposal family that is outside the exponential family.}

\begin{figure}[t]
    \includegraphics[width=\textwidth]{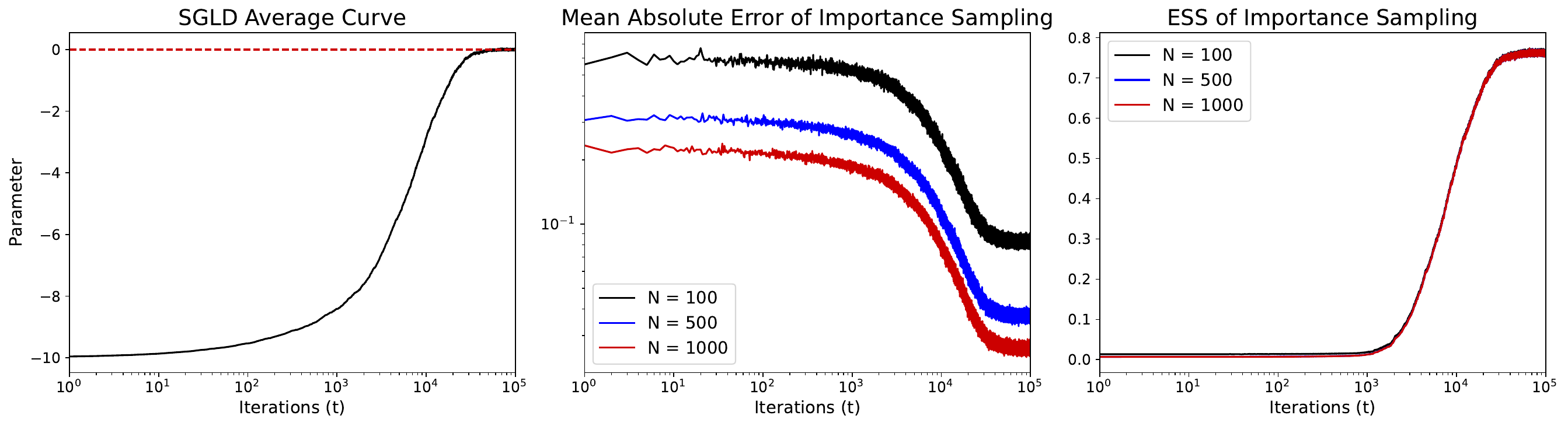}
    \caption{Implementation of SGLD for the Gaussian target with a Student's-t proposal. The results are averaged over $1000$ runs.}
    \label{fig:sgld}
\end{figure}
{For this, let $q_\theta$ be the Student's t distribution with fixed scale $\tau > 0$ and $1 < \nu < 2$ degrees of freedom. We parameterise its mean by $\theta$ and obtain
\begin{align*}
q_\theta(x) = \frac{\Gamma\left(\frac{\nu + 1}{2}\right)}{\Gamma\left(\frac{\nu}{2}\right)} \frac{1}{\sqrt{\pi \nu \tau^2}} \left(1 + \frac{(x - \theta)^2}{\nu \tau^2}\right)^{-\frac{\nu + 1}{2}}.
\end{align*}
We also note the general form
\begin{align}
    \partial_\theta \log q_\theta(x) := \frac{\partial \log q_\theta}{\partial \theta} =\frac{(\nu + 1) (x - \theta)}{\tau^2 \nu + {(x - \theta)^2}}.
\end{align}
Given a sample from a centered $\varepsilon \sim t_\nu(0, 1)$ distribution, we can obtain a sample from $q_\theta$ by adding $\theta$ to it, i.e., $g_\theta(\varepsilon) = \tau \varepsilon + \theta$. For simplicity, let $\pi(x)$ be a Gaussian density:
\begin{align*}
\pi(x) = \frac{1}{\sqrt{2\pi}} \exp\left(-\frac{x^2}{2}\right).
\end{align*}
We recall that for this case, the gradient $H(\theta, \varepsilon)$ can be derived as
\begin{align*}
    H(\theta, \varepsilon) = - \frac{\pi^2(g_\theta(\varepsilon))}{q_\theta^2(g_\theta(\varepsilon))} \partial_\theta \log q_\theta(g_\theta(\varepsilon)),
\end{align*}
as $\partial_\theta g_\theta(\varepsilon) = 1$ and $Z_\pi = 1$. We can now discuss Assumptions~\ref{assump:LocalLip} and \ref{assmp:localDisp} for this case. We first note the general form
\begin{align*}
-\frac{\pi^2(x)}{q^2_\theta(x)} \partial_\theta \log q_\theta(x) = \frac{(1/2\pi) \exp(-x^2/2)}{(1/{\pi \nu \tau^2}) (1 + (x-\theta)^2/(\nu \tau^2))^{-\nu - 1}} \frac{(\nu + 1) (x - \theta)}{\tau^2 \nu + {(x - \theta)^2}}.
\end{align*}
Now, plugging $g_\theta(\varepsilon) = \tau \varepsilon + \theta$ into the above expression, we obtain
\begin{align}\label{eq:StochasticGradient_T}
H(\theta, \varepsilon) = -\frac{\pi^2(g_\theta(\varepsilon))}{q^2_\theta(g_\theta(\varepsilon))} \partial_\theta \log q_\theta(g_\theta(\varepsilon)) = \frac{(1/2\pi) \exp(-(\tau \varepsilon + \theta)^2)}{(1/{\pi \nu \tau^2}) (1 + \varepsilon^2/\nu)^{-\nu - 1}} \frac{(\nu + 1) \varepsilon}{\tau \nu + {\tau \varepsilon^2}}.
\end{align}
Inspecting the $H(\theta, \varepsilon)$ above, it is obvious that the function would not be uniformly Lipschitz in $\theta$ for any $\varepsilon$. The uniform Lipschitzness is a typical condition for optimization methods, which would be unjustified in this case. We show below that, our relaxation in Assumption~\ref{assump:LocalLip} is sufficient to cover this case. Specifically, we have the following proposition.
\begin{prop}\label{prop:LocalLip_example} Let $H(\theta, \varepsilon)$ be as defined in \eqref{eq:StochasticGradient_T}. Then, Assumption~\ref{assump:LocalLip} holds with $\rho = 7$, $L_1 = {\tau(\nu+1)}{\sqrt{1/2e}}$, and $L_2 = (\nu + 1) \nu \tau^2 \sqrt{1/2e}  + 4 (\nu+1) \nu \tau$.
\end{prop}
\begin{proof}
See Appendix~\ref{app:proof:prop:LocalLip_example}.
\end{proof}
A natural question is whether Assumption~\ref{assmp:localDisp} holds for this case. However, Assumption~\ref{assmp:localDisp} is a local dissipativity assumption, which is difficult to verify in general. In this particular example, Assumption~\ref{assmp:localDisp} turns out to be too restrictive. This intuitively means that the problem is not only \textit{non-convex} but it is even more ill-posed than our Assumption~\ref{assmp:localDisp} can handle. We remark nonetheless that our assumptions are the weakest known assumptions under which SGLD can be shown to converge \cite{zhang2023nonasymptotic}, therefore an improvement even for this simple example is challenging. We point out, however, there are two potential solutions to this kind of problem within our framework. First, one can resort to the kind of approach that is proposed in \cite{akyildiz2021convergence} for convexity results. One can simply assume that the parameter space is compact $\Theta \subset \bR^d$. Note that, this is relatively mild: As the parameters can be constrained to a space, it is a reasonable assumption to constrain the mean range of the proposal distribution (note that this is \textit{not} assuming the distributions would live on a compact set). This also eases the problem about Lipschitz constants for $\theta$ but the local Lipschitz property w.r.t. $\varepsilon$ would still need to be handled and Proposition~\ref{prop:LocalLip_example} provides a natural way. This assumption also necessitates including projection steps into the SGLD. Projected SGLD is known to stay close to SGLD (see, e.g., \cite{zou2021faster}), however, we are not aware of a full non-convex optimization result. Secondly, one can use approaches in \cite{lim2021non,lim2021polygonal} to weaken the dissipativity assumption. However, these works use a different discretization method -- hence we omit this and demonstrate below the performance of projected SGLD in this case.}

{For this particular example, we complement the reasoning above with numerical results which show that Assumption~\ref{assmp:localDisp} is stronger than necessary for convergence. We set up a challenging scenario by setting $\theta_0 = -10$. We choose $\Theta = [-10, 10]$ for this experiment, which is a reasonable assumption for the mean of the proposal distribution. We then implement the following scheme
\begin{align*}
    \theta_{k+1} = \mathsf{P}_\Theta \left( \theta_k - \eta H(\theta_k, \varepsilon_k) + \sqrt{2 \eta \beta^{-1}} \xi_{k+1} \right),
\end{align*}
where $\xi_{k} \sim \NPDF(0, 1)$, $\varepsilon_k \sim t_{\nu}(0, 1)$, and $\mathsf{P}_\Theta$ is the projection operator onto $\Theta$. Since the target Gaussian $\pi(x)$ is zero mean, we postulate that $\theta^\star = 0$. We fix $\tau = 1$ and $\nu = 1.5$. We note that, numerical evaluations of the gradient reveals that the gradient vanishes as $|\theta| \to \infty$ (which is one reason why dissipativity does not hold). We set $\beta = 5$ for algorithm to escape the flat valleys efficiently, $\eta = 10^{-2}$, and run the method for $T = 10^5$. We run $1000$ Monte Carlo runs and average the results. We plot the average curve in Figure~\ref{fig:sgld}. This result shows that projected SGLD is able to optimize the parameters of the importance sampling proposals when the proposal family is outside the exponential family.}
\section{Conclusions}\label{sec:conclusions}
We have provided global convergence rates for optimized adaptive importance samplers as introduced by \cite{akyildiz2021convergence}. Specifically, we considered the case of general proposal distributions and described adaptation schemes that globally optimize the $\chi^2$-divergence between the target and the proposal, leading to uniform error bounds for the resulting AIS schemes. Our approach is generic and can be adapted to several other schemes that are shown to be globally convergent. In other words, our guarantees apply when one replaces the SGLD with other optimizers, e.g., momentum-based (underdamped) optimizers \cite{akyildiz2020nonasymptotic}, variance reduced variants \cite{zou2019stochastic}, or tamed Euler schemes \cite{lim2021non} or polygonal schemes \cite{lim2021polygonal} which handle even more relaxed assumptions and enjoy improved stability.
\newpage
\appendix
\section*{Appendix}

{\section{Proofs}
\subsection{Proof of Proposition~\ref{prop:LocalLip_example}}\label{app:proof:prop:LocalLip_example}
We first note that
\begin{align*}
|H(\theta, \varepsilon) - H(\theta', \varepsilon)| &= \left| \frac{(1/2\pi) \exp(-(\tau \varepsilon + \theta)^2)}{(1/{\pi \nu \tau^2}) (1 + \varepsilon^2/\nu)^{-\nu - 1}} \frac{(\nu + 1) \varepsilon}{\tau \nu + {\tau \varepsilon^2}} - \frac{(1/2\pi) \exp(-(\tau \varepsilon + \theta')^2)}{(1/{\pi \nu \tau^2}) (1 + \varepsilon/\nu)^{-\nu - 1}} \frac{(\nu + 1) \varepsilon}{\tau \nu + {\tau \varepsilon^2}} \right|, \\
&= C_1(\varepsilon) \left| \exp(-(\tau \varepsilon + \theta)^2) - \exp(-(\tau \varepsilon + \theta')^2) \right|,
\end{align*}
where
\begin{align*}
    C_1(\varepsilon) =  \left| \frac{(1/2\pi)}{(1/{\pi \nu \tau^2}) (1 + \varepsilon^2/\nu)^{-\nu - 1}} \frac{(\nu + 1) \varepsilon}{\tau \nu + {\tau \varepsilon^2}} \right|,
\end{align*}
We first bound the last term using the fact that the function $x \mapsto \exp(-x^2)$ is Lipschitz with the constant $\sqrt{2/e}$. We obtain
\begin{align*}
\left| \exp(-(\tau \varepsilon + \theta)^2) - \exp(-(\tau \varepsilon + \theta')^2) \right| &\leq \sqrt{{2}/{e}} \left| (\tau \varepsilon + \theta) - (\tau \varepsilon + \theta') \right|, \\
&= \sqrt{{2}/{e}} |\theta - \theta'|.
\end{align*}
We now see the importance of allowing $\varepsilon$-dependence in Assumption~\ref{assump:LocalLip}(i). We can demonstrate that $C_1(\varepsilon)$ has the right dependence by noting
\begin{align*}
    C_1(\varepsilon) =  \left| \frac{(1/2\pi)}{(1/{\pi \nu \tau^2}) (1 + \varepsilon^2/\nu)^{-\nu - 1}} \frac{(\nu + 1) \varepsilon}{\tau \nu + {\tau \varepsilon^2}} \right| \leq \frac{\nu \tau^2}{2} \left(1 + \frac{\varepsilon^2}{\nu}\right)^{\nu + 1} \frac{(\nu + 1) \varepsilon}{\tau \nu},
\end{align*}
by using $\tau \nu + \tau \varepsilon^2 \geq \tau \nu$. We can now use the fact that $\nu > 1$ to obtain
\begin{align*}
    C_1(\varepsilon) \leq \frac{\tau (\nu+1)}{2} \left(1 + \varepsilon\right)^{2\nu + 3}.
\end{align*}
We see that the Assumption~\ref{assump:LocalLip}(i) holds with $\rho = 2 \nu + 3$ and $L_1 = {\tau(\nu+1)}{\sqrt{1/2e}}$. We now would like to check Assumption~\ref{assump:LocalLip}(ii), i.e., the Lipschitz continuity of $H(\theta, \varepsilon)$ in its second argument. We note that
\begin{align*}
    |H(\theta, \varepsilon) - H(\theta, \varepsilon')| &= \left| \frac{(1/2\pi) \exp(-(\tau \varepsilon + \theta)^2)}{(1/{\pi \nu \tau^2}) (1 + \varepsilon^2/\nu)^{-\nu - 1}} \frac{(\nu + 1) \varepsilon}{\tau \nu + {\tau \varepsilon^2}} - \frac{(1/2\pi) \exp(-(\tau \varepsilon' + \theta)^2)}{(1/{\pi \nu \tau^2}) (1 + \varepsilon'^2/\nu)^{-\nu - 1}} \frac{(\nu + 1) \varepsilon'}{\tau \nu + {\tau \varepsilon'^2}} \right|,
\end{align*}
In the interest of simplifying common constants, we rewrite that
\begin{align*}
    |H(\theta, \varepsilon) - H(\theta, \varepsilon')| &= \frac{(\nu+1) \nu \tau}{2} \left| \frac{\exp(-(\tau \varepsilon + \theta)^2)}{(1 + \varepsilon^2/\nu)^{-\nu - 1}} \frac{\varepsilon}{\nu + \varepsilon^2} - \frac{\exp(-(\tau \varepsilon' + \theta)^2)}{(1 + \varepsilon'^2/\nu)^{-\nu - 1}} \frac{\varepsilon'}{\nu + \varepsilon'^2} \right|.
\end{align*}
Let $c_1 = {(\nu+1) \nu \tau}/{2}$ for brevity. We start our estimates by a basic splitting to get rid of the exponential term
\begin{align*}
    |H(\theta, \varepsilon) - H(\theta, \varepsilon')| &\leq  c_1 \left| \frac{\exp(-(\tau \varepsilon + \theta)^2)}{(1 + \varepsilon^2/\nu)^{-\nu - 1}} \frac{\varepsilon}{\nu + \varepsilon^2} - \frac{\exp(-(\tau \varepsilon + \theta)^2)}{(1 + \varepsilon'^2/\nu)^{-\nu - 1}} \frac{\varepsilon'}{\nu + \varepsilon'^2} \right| \\
    &+ c_1 \left| \frac{\exp(-(\tau \varepsilon + \theta)^2)}{(1 + \varepsilon'^2/\nu)^{-\nu - 1}} \frac{\varepsilon'}{\nu + \varepsilon'^2} - \frac{\exp(-(\tau \varepsilon' + \theta)^2)}{(1 + \varepsilon'^2/\nu)^{-\nu - 1}} \frac{\varepsilon'}{\nu + \varepsilon'^2} \right|, \\
    &\leq c_1 \left| \frac{(1 + \varepsilon^2/\nu)^{+\nu + 1}\varepsilon}{\nu + \varepsilon^2} - \frac{(1 + \varepsilon'^2/\nu)^{+\nu + 1}\varepsilon'}{\nu + \varepsilon'^2} \right| \\
    &+ c_1 \left| \exp(-(\tau \varepsilon + \theta)^2) - \exp(-(\tau \varepsilon' + \theta)^2) \right| \left| \frac{\varepsilon' (1 + \varepsilon^2/\nu)^{\nu + 1}}{\nu + \varepsilon'^2} \right|.
\end{align*}
Using the fact that the function $x \mapsto \exp(-x^2)$ is $\sqrt{2/e}$-Lipschitz again, we get
\begin{align*}
    |H(\theta, \varepsilon) - H(\theta, \varepsilon')| &\leq  c_1 \left| \frac{(1 + \varepsilon^2/\nu)^{+\nu + 1}\varepsilon}{\nu + \varepsilon^2} - \frac{(1 + \varepsilon'^2/\nu)^{+\nu + 1}\varepsilon'}{\nu + \varepsilon'^2} \right| \\
    &+ c_1 \tau \sqrt{2/e} | \varepsilon - \varepsilon'| \left| \frac{\varepsilon' (1 + \varepsilon^2/\nu)^{\nu + 1}}{\nu + \varepsilon'^2} \right|.
\end{align*}
Some manipulations to the last term yields
\begin{align}\label{eq:stoc_grad_manipulations_1}
    |H(\theta, \varepsilon) - H(\theta, \varepsilon')| &\leq  c_1 \left| \frac{(1 + \varepsilon^2/\nu)^{+\nu + 1}\varepsilon}{\nu + \varepsilon^2} - \frac{(1 + \varepsilon'^2/\nu)^{+\nu + 1}\varepsilon'}{\nu + \varepsilon'^2} \right| + c_1 \tau \sqrt{2/e} | \varepsilon - \varepsilon'|  \left( 1 + |\varepsilon|\right)^{2\nu + 3}.
\end{align}
Now we turn our attention to the first term in equation~\eqref{eq:stoc_grad_manipulations_1} and note that (ignoring $c_1$ term for now):
\begin{align*}
    \left| \frac{(1 + \varepsilon^2/\nu)^{+\nu + 1}\varepsilon}{\nu + \varepsilon^2} - \frac{(1 + \varepsilon'^2/\nu)^{+\nu + 1}\varepsilon'}{\nu + \varepsilon'^2} \right| &\leq \left| \frac{(1 + \varepsilon^2/\nu)^{+\nu + 1}\varepsilon}{\nu + \varepsilon^2} - \frac{(1 + \varepsilon'^2/\nu)^{\nu + 1}\varepsilon'}{\nu + \varepsilon^2} \right| \\
    &+ \left| \frac{(1 + \varepsilon'^2/\nu)^{\nu + 1}\varepsilon'}{\nu + \varepsilon^2} - \frac{(1 + \varepsilon'^2/\nu)^{+\nu + 1}\varepsilon'}{\nu + \varepsilon'^2} \right|, \\
    &\leq \left| \varepsilon (1 + \varepsilon^2/\nu)^{\nu + 1} - \varepsilon' (1 + \varepsilon'^2/\nu)^{\nu + 1} \right| \\
    &+ \left|(1 + \varepsilon'^2/\nu)^{\nu + 1}\varepsilon'\right| \left| \frac{1}{\nu + \varepsilon^2} - \frac{1}{\nu + \varepsilon'^2} \right|.
\end{align*}
Note that the function $\varepsilon \mapsto 1/(\nu + \varepsilon)$ is $1$-Lipschitz, therefore,
\begin{align*}
    \left| \frac{(1 + \varepsilon^2/\nu)^{+\nu + 1}\varepsilon}{\nu + \varepsilon^2} - \frac{(1 + \varepsilon'^2/\nu)^{+\nu + 1}\varepsilon'}{\nu + \varepsilon'^2} \right| &\leq \left| \varepsilon (1 + \varepsilon^2/\nu)^{\nu + 1} - \varepsilon' (1 + \varepsilon'^2/\nu)^{\nu + 1} \right| + (1 + |\varepsilon'|)^{2\nu + 3} \left| \varepsilon - \varepsilon'\right|.
\end{align*}
Finally, we need to find the local Lipschitz constant of the first term above:
\begin{align*}
    \left| \varepsilon (1 + \varepsilon^2/\nu)^{\nu + 1} - \varepsilon' (1 + \varepsilon'^2/\nu)^{\nu + 1} \right| &\leq \left| \varepsilon (1 + \varepsilon^2/\nu)^{\nu + 1} - \varepsilon' (1 + \varepsilon^2/\nu)^{\nu + 1} \right| \\
    &+ \left| \varepsilon' (1 + \varepsilon^2/\nu)^{\nu + 1} - \varepsilon' (1 + \varepsilon'^2/\nu)^{\nu + 1} \right|, \\
    &\leq (1 + |\varepsilon|)^{2 \nu + 2} |\varepsilon - \varepsilon'| + |\varepsilon'| \left| (1 + \varepsilon^2/\nu)^{\nu + 1} - (1 + \varepsilon'^2/\nu)^{\nu + 1} \right|.
\end{align*}
We have to now identify the Lipschitz constant of the function $f(\varepsilon) = (1 + \varepsilon^2/\nu)^{\nu + 1}$. We note that
\begin{align*}
    f'(\varepsilon) = \frac{2 (\nu+1)}{\nu} (1 + \varepsilon^2/\nu)^\nu \varepsilon.
\end{align*}
We can see that this implies by the mean value theorem (for $\varepsilon < \varepsilon'$ without loss of generality) that
\begin{align*}
|f(\varepsilon) - f(\varepsilon')| \leq \sup_{c \in [\varepsilon, \varepsilon']} |f'(c)| |\varepsilon - \varepsilon'|.
\end{align*}
Hence, we will proceed to find an upper bound for $\sup_{c \in [\varepsilon, \varepsilon']} |f'(c)|$. Note that using $1 < \nu <2$, we can readily obtain
\begin{align*}
\sup_{c \in [\varepsilon, \varepsilon']} |f'(c)| &= \sup_{c \in [\varepsilon, \varepsilon']} \frac{2 (\nu+1)}{\nu} (1 + c^2/\nu)^\nu |c|, \\
&\leq \sup_{c \in [\varepsilon,\varepsilon']} 6 (1 + c^2)^2 |c|, \\
&\leq \sup_{c\in [\varepsilon, \varepsilon']} 6 (1 + |c|)^5, \\
&\leq 6 (1 + |\varepsilon| + |\varepsilon'|)^5.
\end{align*}
The last line is true regardless of the ordering of $\varepsilon$ and $\varepsilon'$. We can now conclude that
\begin{align*}
    |\varepsilon'| \left| (1 + \varepsilon^2/\nu)^{\nu + 1} - (1 + \varepsilon'^2/\nu)^{\nu + 1} \right| \leq 6 (1 + |\varepsilon| + |\varepsilon'|)^6 |\varepsilon - \varepsilon'|.
\end{align*}
Finally, merging all these, we can upper bound the first term of
\begin{align*}
|H(\theta, \varepsilon) - H(\theta, \varepsilon')| &\leq (c_1 \tau \sqrt{2/e} + 8 c_1) \left(1 + |\varepsilon| + |\varepsilon'|\right)^7 |\varepsilon - \varepsilon'|,
\end{align*}
which shows that Assumption~\ref{assump:LocalLip}(ii) holds with $L_2 = (c_1 \tau \sqrt{2/e} + 8 c_1)$ and $\rho = 7$. This concludes the proof. \qed}
\bibliographystyle{amsplain}
\bibliography{../draft}

\end{document}